\newcommand{\xRightarrow}[2][]{\ext@arrow 0359\Rightarrowfill@{#1}{#2}}
\newcommand{\bi}{\begin{array}[t]{@{}l@{}}}
\newcommand{\ei}{\end{array}}
\newcommand{\ba}{\begin{array}}
\newcommand{\ea}{\end{array}}
\newcommand{\bda}{\[\ba}
\newcommand{\eda}{\ea\]}
\newcommand{\bp}{\begin{quote}\tt\begin{tabbing}}
\newcommand{\ep}{\end{tabbing}\end{quote}}
\newcommand{\ignore}[1]{}
\newcommand{\ms}[1]{{\bf MS: #1}}
\newcommand{\mathem}{\sf}
\newcommand{\lockset}[1]{{\mathem lockset}(#1)}
\newcommand{\evt}[2]{#1_{#2}}
\newcommand{\thread}[2]{#1 \sharp #2}
\newcommand{\SHBEE}{\mbox{SHB$^{\scriptstyle E+E}$}}
\newcommand{\SHBE}{\mbox{SHB$^{\scriptstyle E}$}}  
\newcommand{\SHBP}{\mbox{SHB$^{\forall +}$}}
\newcommand{\SHBFORALL}{\mbox{SHB$^\forall$}}
\newcommand{\SHB}{\mbox{SHB}}
\newcommand{\Tsan}{\mbox{Tsan}}
\newcommand{\e}{\alpha}
\newcommand{\f}{\beta}
\newcommand{\g}{\gamma}
\newcommand{\incC}[2]{{\mathem inc}(#1,#2)}
\newcommand{\maxN}[2]{{\mathem max}(#1,#2)}
\newcommand{\raceVC}[2]{{\mathem raceCheck}(#1,#2)}
\newcommand{\raceWRD}[2]{{\mathem raceWRDCheck}(#1,#2)}
\newcommand{\supVC}[2]{#1 \sqcup #2}
\newcommand{\lteVC}[2]{#1 \sqsubseteq #2}
\newcommand{\accVC}[2]{#1[ #2 ]}
\newcommand{\accTr}[2]{#1[ #2 ]}
\newcommand{\updateVC}[3]{#1[ #2 \mapsto #3 ]}
\newcommand{\pp}{\ \texttt{++}}
\newcommand{\lockE}[1]{\mathit{acq}(#1)} 
\newcommand{\unlockE}[1]{\mathit{rel}(#1)} 
\newcommand{\readE}[1]{r(#1)}
\newcommand{\writeE}[1]{w(#1)}
\newcommand{\readEE}[2]{r(#1)_{#2}}
\newcommand{\writeEE}[2]{w(#1)_{#2}}
\newcommand{\lockEE}[2]{acq(#1)_{#2}}  
\newcommand{\unlockEE}[2]{rel(#1)_{#2}} 
\newcommand{\forkEE}[2]{{\mathit fork}(#1)_{#2}}
\newcommand{\joinEE}[2]{{\mathit join}(#1)_{#2}}
\newcommand{\extract}[2]{#1 \downarrow #2}
\newcommand{\proj}[2]{\mathit{proj}_{\sharp #1}(#2)}
\newcommand{\pos}[1]{\mathit{pos}(#1)}
\newcommand{\posP}[2]{\mathit{pos}_{{\scriptstyle #1}}(#2)}
\newcommand{\compTID}[1]{\mathit{thread}(#1)}
\newcommand{\compTIDP}[2]{\mathit{thread}_{{\scriptstyle #1}}(#2)}
\newcommand{\events}[1]{\mathit{events}(#1)}
\newcommand{\hb}[2]{#1 <^{\scriptscriptstyle HB} #2}
\newcommand{\notHB}[2]{#1 \| #2}
\newcommand{\hbP}[3]{#2 <^{\scriptscriptstyle HB(#1)} #3}
\newcommand{\vcN}{V}                                  
\newcommand{\rwVC}[1]{\mathit{RW}(#1)}
\newcommand{\wVC}[1]{\mathit{W}(#1)}
\newcommand{\writeVC}[1]{\mathit{Write}(#1)}
\newcommand{\lastWriteVC}[1]{\mathit{L_W}(#1)}
\newcommand{\lastWriteID}[1]{\mathit{L_W^{id}}(#1)}
\newcommand{\readVC}[1]{\mathit{Read}(#1)}
\newcommand{\threadVC}[1]{\mathit{Th}(#1)}
\newcommand{\lockVC}[1]{\mathit{Rel}(#1)}
\newcommand{\edges}[1]{\mathit{edges}(#1)}
\newcommand{\oldedges}{\mathit{E}}
\newcommand{\gtEdge}{\prec}
\newcommand{\rwT}[1]{T^{rw}_{#1}}
\newcommand{\rwTx}{\rwT{x}}
\newcommand{\vcEvt}{\mathit{evt}}
\newcommand{\concEvt}[1]{\mathit{conc}(#1)}
\newcommand{\allRaces}{\mathcal{R}^T}
\newcommand{\allConcs}[1]{{\mathcal C}^T(#1)}
\newcommand{\accConcs}[1]{{\mathcal P}(#1)}
\newcommand{\accR}[1]{{\mathcal R}(#1)}
\newdimen\legendxshift
\newdimen\legendyshift
\newcommand{\bclldist}{1mm}
\newcommand{\bclegend}[3][10mm]{%
	\legendxshift=0pt\relax
	\legendyshift=0pt\relax
	\xdef\legendnodes{}%
	\foreach \lcolor/\ltext [count=\ll from 1] in {#3}%
	{\global\legendlines\ll\pgftext{\setbox0\hbox{\bcfontstyle\ltext}\ifdim\wd0>\legendxshift\global\legendxshift\wd0\fi}}%
	\@tempdima#1\@tempdima0.5\@tempdima
	\pgftext{\bcfontstyle\global\legendxshift\dimexpr\bcwidth-\legendxshift-\bclldist-\@tempdima-0.72em}
	\legendyshift\dimexpr5mm+#2\relax
	\legendyshift\legendlines\legendyshift
	\global\legendyshift\dimexpr\bcpos-2.5mm+\bclldist+\legendyshift
	\begin{scope}[shift={(\legendxshift,\legendyshift)}]
		\coordinate (lp) at (0,0);
		\foreach \lcolor/\ltext [count=\ll from 1] in {#3}%
		{
			\node[anchor=north, minimum width=#1, minimum height=5mm,fill=\lcolor] (lb\ll) at (lp) {};
			\node[anchor=west] (l\ll) at (lb\ll.east) {\bcfontstyle\ltext};
			\coordinate (lp) at ($(lp)-(0,5mm+#2)$);
			\xdef\legendnodes{\legendnodes (lb\ll)(l\ll)}
		}
		\node[draw, inner sep=\bclldist,fit=\legendnodes] (frame) {};
	\end{scope}
}
\begin{document}

\title{Predicting All Data Race Pairs for a Specific Schedule (extended version)}         




\author{Martin Sulzmann}
\affiliation{
  \department{Faculty of Computer Science and Business Information Systems}
  \institution{Karlsruhe University of Applied Sciences}
  \streetaddress{Moltkestrasse 30}
  \city{Karlsruhe}
  \postcode{76133}
  \country{Germany}
}
\email{martin.sulzmann@hs-karlsruhe.de}

\author{Kai Stadtm{\"u}ller}
\affiliation{
  \department{Faculty of Computer Science and Business Information Systems}
  \institution{Karlsruhe University of Applied Sciences}
  \streetaddress{Moltkestrasse 30}
  \city{Karlsruhe}
  \postcode{76133}
  \country{Germany}
}
\email{kai.stadtmueller@live.de}

\keywords{concurrency, dynamic analysis, data race, happens-before, vector clocks}

\begin{abstract}
  We consider the problem of data race prediction where the program's behavior
  is represented by a trace. A trace is a sequence of program events recorded
  during the execution of the program.
  We employ the schedulable happens-before relation to characterize
  all pairs of events that are in a race for the schedule as manifested in the trace.
  Compared to the classic happens-before relation, the schedulable happens-before relations
  properly takes care of write-read dependencies and thus avoids false positives.
  The challenge is to efficiently identify all (schedulable) data race pairs.
  We present a refined linear time vector clock algorithm
  to predict many of the schedulable data race pairs.
  We introduce a quadratic time post-processing algorithm to predict
  all remaining data race pairs.
  This improves the state of the art in the area and our experiments
  show that our approach scales to real-world examples.
  Thus, the user can systematically examine and fix all program locations that are in a race
  for a particular schedule.
\end{abstract}

\settopmatter{printacmref=false} 

\maketitle

\section{Introduction}

We consider the challenge of data race prediction
for a specific program run.
We assume concurrent programs making use of shared variables
and acquire/release (a.k.a.~lock/unlock) primitives.
We assume that relevant program events such as write/read and
acquire/release operations are recorded in a program trace.
This trace is the basis for our analysis.

The challenge of trace-based data race prediction is that
two conflicting events, e.g.~two writes involving the same variable,
may not necessarily be found next to each other in the trace.
That is, under some interleaving semantics these two writes
happen in sequence but in the actual run-time execution environment
these two writes may happen concurrently to each other.
The actual challenge is to predict a data race by identifying
a valid reordering of the trace under which both conflicting events
appear next to each other in the (reordered) trace.

In this work, we follow the happens-before~\cite{lamport1978time} line of work
for dynamic data race prediction based on a specific program trace.
Our goal is to efficiently identify \emph{all} data race \emph{pairs} for
a \emph{specific schedule}.
This is something that has not been addressed by any prior happens-before based data race predictors.
To illustrate the issue and show the usefulness of our method, we consider the following example.

\begin{verbatim}
// Thread 1              // Thread 2
{                        {
   x = 5;   // E1          acquire(y);  // E3
   x = 6;   // E2          x = 7;       // E4
}                          release(y);  // E5
                         }
\end{verbatim}
We record the trace of events that arises during the interleaved execution of both threads.
We assume a program run where first the writes on shared variable $x$ in thread 1 are executed followed by the
acquire mutex $y$, write $x$ and release mutex $y$ in thread 2.
The resulting trace is of the form $[E_1, E_2, E_3, E_4, E_5]$ where we simply record the program locations
connected to each event.

FastTrack~\cite{flanagan2010fasttrack} and SHB~\cite{Mathur:2018:HFR:3288538:3276515}
are state-of-the art happens-before based race predictors.
For the above trace, both race predictors
only report that $E_4$ is part of a write-write race.
This information is not very helpful as location $E_4$ is protected by a mutex.
Our method reports the pair $(E_2, E_4)$ by enjoying the same $O(n*k)$ time complexity
as the earlier works. Parameter $n$ refers to the size of the trace and parameter $k$ to the number of threads.
Based on the pair $(E_2, E_4)$,
the user can easily see that $E_2$ is unprotected and concludes that this location needs to be fixed.

Assuming some additional processing step, our method is able to identify all pairs of events
that are in a race for the schedule represented by the trace $[E_1, E_2, E_3, E_4, E_5]$.
For the above example, we additionally report the pair $(E_1, E_4)$.
By having such complete diagnostic information, the user can systematically fix the program.
For example, by protecting $E_1$ \emph{and} $E_2$ via a mutex.

The additional processing step requires $O(n*n)$ time.
This extra cost is worthwhile as the more detailed diagnostic information potentially
avoids further incremental re-runs to fix one race after the race.
More seriously, incremental fixing of races might obfuscate some races.

For example, consider a quick incremental fix based on the pair $(E_2, E_4)$
where we guard location $E_2$ via a mutex but $E_1$ remains unguarded.
The resulting program is as follows.
\begin{verbatim}
// Thread 1              // Thread 2
{                        {
   x = 5;     // E1        acquire(y);  // E3
   acquire(y) // E2a       x = 7;       // E4
   x = 6;     // E2b       release(y);  // E5
   release(y) // E2c     }
}                          
\end{verbatim}
Suppose, we re-run the program where we assume a similar interleaved execution as before.
First thread 1 and then thread 2. This leads to
the trace $[E_1, E_{2a}, E_{2b}, E_{2c}, E_3, E_4, E_5]$.
For this trace, both FastTrack and SHB report that there is no race.
The reason for this is that the acquire event at location $E_3$ must happen
after the preceding release event at location $E_{2c}$.
Hence, the write at location $E_1$ appears to happen before the write at location $E_4$.
Hence, there is no data race for this schedule.

However, under a different schedule where thread 2 executes first,
we find that $(E_1, E_4)$ are in a write-write race.
The issue is that the happens-before order relation is sensitive
to the schedule of events as recorded in the trace.
We say that the happens-before order is \emph{trace/schedule-specific}.

Recent works such as~\cite{Kini:2017:DRP:3140587.3062374,Roemer:2018:HUS:3296979.3192385}
attempt to derive \emph{some further} data races
for as many \emph{alternative} schedules as possible.
These works, like FastTrack and SHB,
only report some of the events involved in a race.
Further re-runs to fix races that result from already explored
schedules are necessary.

Our approach is to  report all pairs of events that are in a race for the trace-specific schedule.
This enables the user to systematically examine and fix all races for a specific schedule.
We achieve this via  a novel two-phase data race predictor
where in the first phase we (a) predict as many conflicting pairs of events, and
(b) generate a compact, variable-specific representation of the
happens-before relation to which we refer to as edge constraints.
The second phase uses the reported conflicting pairs of events
and edge constraints to identify
all remaining races via a simple graph traversal.

In summary, our contributions are:
\begin{itemize}
\item We formalize and rigorously verify our 
       two-phase method to identify all races for a trace-specific schedule
      (Sections~\ref{sec:all-concurrent-writes-reads} and~\ref{sec:all-concurrent-writes-reads-post-processing}).

\item We have fully implemented the approach and
  provide for a comparison with the state
   of the art in this area (Section~\ref{sec:implementation}).
\end{itemize}

The upcoming section reviews the idea behind prior happens-before based
data race predictors and highlights the main idea behind our approach.
Background on events, run-time traces
and the happens-before relation is introduced in Section~\ref{sec:events-traces}.
Section~\ref{sec:conclusion} concludes and summarizes related work.

Additional material such as proofs of results stated are given
in the appendix.

\section{Technical Overview}
\label{sec:technical-overview}
\label{sec:our-idea}

\subsection{Data Race Prediction via Vector Clocks}

Vector clocks are a popular method to establish
the happens-before relation among events.
A vector clock $V$ is an array of time stamps (clocks)
where each array position belongs to a specific thread.
For each event we compute its vector clock
and can thus identify the relative order among
events by comparing vector clocks.

\begin{definition}[Vector Clocks]
  A \emph{vector clock} $V$ is a list of \emph{time stamps} of the following form.
  \bda{rcl}
   V  & ::= & [i_1,\dots,i_n]
   \eda
   We assume vector clocks are of a fixed size $n$.
   Time stamps are natural numbers and each time stamp position $j$ corresponds to the thread
   with identifier $j$.

We define vector clock $V_1$ to be smaller than vector clock $V_2$,
written $V_1 < V_2$,
if (1) for each thread $i$, $i$'s time stamp in $V_1$ is smaller or equal
compared to $i$'s time stamp in $V_2$, and
(2) there exists a thread $i$ where $i$'s time stamp in $V_1$
is strictly smaller compared to $i$'s time stamp in $V_2$.
\end{definition}
If the vector clock assigned to event $e$ is smaller
compared to the vector clock assigned to $f$,
then we can argue that $e$ happens before $f$.

\begin{figure}[tp]
\bda{lll|ll|l}
  & \thread{1}{} & [1,0] & \thread{2}{} & [0,1] &   \writeVC{x} 
\\ \hline
1. & \writeE{x} & [1,0] && & [1,0] 
\\
2. & \writeE{x} & [2,0] && & [2,0] 
\\
3. &            & & \writeE{x} & [0,1] & [2,1] 
\eda
  \caption{Vector Clock Construction (\SHB)}
    \label{fig:vc-construction-simple}
\end{figure}

Figure~\ref{fig:vc-construction-simple}
shows an example of how vector clocks are constructed. 
We consider a trace with events from two threads.
The trace resembles the trace from the introduction
where for brevity we omit acquire/release events.

Events are recorded in linear order as they take place during
program execution.
For each event, we record the position in the trace as well as
the id of the thread in which the event took place.
We use a tabular notation to record this information.

For the first thread we find a write followed by another write.
In the trace, the write in the second thread appears after
the writes in the first thread.
As the writes lack any synchronization,
we conclude that
$(\writeEE{x}{1}, \writeEE{x}{3})$
and $(\writeEE{x}{2}, \writeEE{x}{3})$
are two pairs of events that represent a race.

Let us carry out the vector clock construction steps.
The initial vector clock for thread 1 is $[1,0]$
and for thread 2 it is $[0,1]$.
For each event, we record the vector clock when processing the event.
After processing, we increment the time stamp of the thread.
We ignore the column $\writeVC{x}$ for the moment.

How to check for a write-write data race?
Assuming that each event carries a vector clock, we simply
need to compare the vector clocks of events.
For example, $[1,0] \not < [0,1]$ and $[1,0] \not > [0,1]$.
Hence, there is a trace reordering where
we can place $\writeEE{x}{1}$ and $\writeEE{x}{3}$
right next to each other.

The above reasoning implies that
(1) for each event we need to store its vector clock, and
(2) consider all possible combinations of events that might form
a race and then compare their vector clocks.
This requires time $O(n*n*k)$
where $n$ is the size of the trace and $k$ the number of threads.
A rather costly computation and therefore
data race predictors such as FastTrack
and \SHB\ perform a different approach
that only requires time $O(n*k)$.

\SHB\ keeps track of all writes that took place
via the vector clock $\writeVC{x}$.
For the first write $\writeEE{x}{1}$,
we simply set $\writeVC{x}$ to $[1,0]$.
Each subsequent write synchronizes with $\writeVC{x}$.
Before synchronization, each write checks
if its vector clock is greater or equal
to the vector clock recorded in $\writeVC{x}$.
If not, we report a race, as the current write
must be concurrent to an earlier write.

Hence, SHB reports that $\writeEE{x}{3}$ is part of a data race.
But SHB neither reports the full pair that represents the race
such as $(\writeEE{x}{2}, \writeEE{x}{4})$,
nor reports all events that are part of a race.
The same observations applies to FastTrack.

\begin{figure}[tp]
\bda{lll|ll|ll}
  & \thread{1}{} & [1,0] & \thread{2}{} & [0,1] &   \wVC{x} & \oldedges
\\ \hline
1. & \writeE{x} & [1,0] && & \{ \thread{1}{1} \} & 
\\
2. & \writeE{x} & [2,0] && & \{ \thread{1}{2} \} &  \{ \thread{1}{1} \gtEdge \thread{1}{2} \}
\\
3. &            & & \writeE{x} & [0,1] & \{ \thread{1}{2}, \thread{2}{1} \} &
\eda
  \caption{Epochs and Edge Constraints (\SHBEE)}
    \label{fig:epochs-edges-simple}
\end{figure}

\subsection{ Our Idea: Epochs and Edge Constraints}

We make two major adjustments to the \SHB\ algorithm
to compute (1) pairs of events that are in a race and (2) all such pairs for the given schedule.
We refer to the resulting algorithm as \SHBEE. We explain the adjustments based on our running example.

First, instead of using a vector clock $\writeVC{x}$ to maintain
the concurrent writes,
we use a set $\wVC{x}$ of time stamps per concurrent write (referred to as epoch).
Thus, we can report \emph{some} conflicting \emph{pairs of events} that form a data race
while still guaranteeing the $O(n*k)$ time complexity like \SHB.
Some data race pairs are still missing.

Our second adjustment of the SHB algorithm builds up a set of edge constraints.
Each time a write happens after some of the currently
recorded writes in $\wVC{x}$ and edge is added among the two writes involved.
By traversing edges starting from an existing race pair,
we report \emph{all} conflicting \emph{pairs of events} that form a data race.
This (post-processing) step takes time $O(n*n)$ as we will show in detail later.

We first consider the issue of reporting a pair
of events that form a race, instead of just reporting
some event that is part of race as it is done in \SHB.
To achieve this more refined reporting,
we use epochs.

An epoch is a pair of thread id and time stamp for that thread.
\begin{definition}[Epoch]
  Let $j$ be a thread id and $k$ be a time stamp.
  Then, we write $\thread{j}{k}$ to denote an \emph{epoch}.
\end{definition}
Each event can be uniquely associated to an epoch.
Take its vector clock and extract the time stamp $k$ for the thread $j$
the event belongs to. For each event this pair of information
represents a unique key to locate the event.

We revisit our earlier example.
See Figure~\ref{fig:epochs-edges-simple}.
We ignore the component $\oldedges$ for the moment.
In the first step, we add $\writeEE{x}{1}$'s epoch $\thread{1}{1}$
to $\wVC{x}$.
We maintain the invariant that $\wVC{x}$
is the set of recently processed writes that are concurrent to each other.
Hence, after processing the second write we find $\wVC{x}$ to be equal
$\thread{1}{2}$ as $\thread{1}{1} < \thread{1}{2}$.

Consider the processing of $\writeEE{x}{3}$ where $\thread{2}{1}$
is its epoch and $[0,1]$ is its vector clock.
The time stamp of thread 1 for vector clock $[0,1]$
is smaller than the time stamp recorded by epoch $\thread{1}{2}$.
Hence, we argue that the write represented by $\thread{1}{2}$
is concurrent to the write represented by $\thread{2}{1}$.
Then, $\wVC{x}$ becomes equal to $\{ \thread{1}{2}, \thread{2}{1} \}$.

We conclude that the associated events are in a write-write race
and report the write-write race pair $(\writeEE{x}{2}, \writeEE{x}{3})$.
So, via epochs we can provide more refined data race reports.
The complexity remains the same compared to the original
\SHB\ algorithm. The set of epochs grows as much as $O(k)$.
Hence, we require time $O(n * k)$
to report the same conflicting events as \SHB\ \emph{but}
we additionally also report the complete conflicting pair of events.

To compute all races, we require one further adaptation of the \SHB\
algorithm. Each time we replace an epoch in $\wVC{x}$
by an epoch that happens later, we add an edge
from the epoch to its replacement.

\begin{definition}[Edge]
  Let $\thread{i}{k}$ and $\thread{j}{k}$ be two epochs.
  Then, we write $\thread{i}{k} \gtEdge \thread{j}{k}$
  to denote the \emph{edge} from
  $\thread{i}{k}$ to $\thread{j}{k}$.
  We sometimes refer to $\thread{i}{k} \gtEdge \thread{j}{k}$
  as an \emph{edge constraint}.
  We write $\gtEdge^*$ to denote the transitive closure
  among edge constraints.
\end{definition}

These edge constraints are collected in $\oldedges$.
In our example, we add
$\thread{1}{1} \gtEdge \thread{2}{2}$.
Edge constraints represent a condensed view
of the happens-before relation restricted to a specific variable.

We employ edge constraints in a post-processing phase
to identify all missing race pairs.
When checking for further races starting
with an existing race pair $(e,f)$,
we look for events $g$ that are reachable via edge constraints
from~$e$ and~$f$.

If $g \gtEdge^* e$ and $g \gtEdge^* f$,
then neither $(g,e)$ nor $(g,f)$ form another pair of conflicting events.
This is the case because events in edge constraint relation
are in the happens-before relation.
If $g \gtEdge^* e$ and $g \not \gtEdge^* f$,
then $(g,f)$ is a potential pair of conflicting events.
Potential because while edge constraints are sound
but they are not complete w.r.t.~the happens-before relation
as we will explain in detail later.

For our running example, we have the race pair
$(\writeEE{x}{2}, \writeEE{x}{3})$ and find
$\thread{1}{1} \gtEdge^* \thread{2}{1}$.
Hence, we conclude that $(\writeEE{x}{1}, \writeEE{x}{3})$
is another race pair.
We can show that we thus compute
all race pairs for the trace-specific schedule
and the construction takes time $O(n*n)$.

Experiments show that our approach is effective
and provides much more detailed diagnostic information
to the user compared to \SHB\ and FastTrack.

\section{Events and Run-Time Traces}
\label{sec:events-traces}

We introduce some background material on events and run-time traces.
We consider concurrent programs that
make use of threads, lock-based primitives, acquire and release of a mutex,
and shared memory reads and writes.
We assume that reads and writes follow the sequential consistency memory model.

Execution of a program yields a trace.
A trace is a sequence of events that took place
and represents the interleaved execution of the various threads found in the program.
Below, we formalize the shape of a trace and the kind of events we consider.
Details of how to obtain a trace are discussed
in the later Section~\ref{sec:implementation}.

\begin{definition}[Run-Time Traces and Events]
\label{def:run-time-traces-events}  
\bda{lcll}
  T & ::= & [] \mid \thread{i}{e} : T   & \mbox{Trace}
  \\ e,f,g & ::= &  \readEE{x}{j}
           \mid \writeEE{x}{j}
           \mid \lockEE{y}{j}
           \mid \unlockEE{y}{j}

           & \mbox{Events}
\eda
\end{definition}

A trace $T$ is a list of events. We adopt Haskell notation for lists
and assume that the list of objects $[o_1,\dots,o_n]$ is a shorthand
for $o_1:\dots:o_n:[]$. We write $\pp$ to denote the concatenation operator among lists.
For each event $e$, we record the thread id number $i$ in which the event took place,
written $\thread{i}{e}$, and the position $j$ of the event in the trace.
We sometimes omit the thread id and position for brevity.

We write $\readEE{x}{j}$ and $\writeEE{x}{j}$
to denote a read and write event on shared variable $x$ at position $j$.
We write $\lockEE{y}{j}$ and $\unlockEE{y}{j}$
to denote a lock and unlock event on mutex $y$ at position $j$.
For brevity, we omit intra-thread synchronization primitives such as fork and join.
They are dealt with by our implementation and do not pose any challenges
in terms of the underlying theory as their
treatment is very similar to acquire and release.
For details see Appendix~\ref{sec:fork-join}.

For trace $T$, we assume some helper functions to access the thread id and position of $e$. 
We define $\compTIDP{T}{e} = j$ if $T=T_1 \pp\ [\thread{j}{e}] \pp\ T_2$ for some traces $T_1, T_2$.
We define $\posP{T}{\readEE{x}{j}} = j$ and so on.
We assume that the trace position is \emph{accurate}:
If $\posP{T}{e} = n$ then $T=\thread{i_1}{e_1}: \dots : \thread{i_{n-1}}{e_{n-1}} : \thread{i}{e} : T'$
for some events $\thread{i_k}{e_k}$ and trace $T'$.
We sometimes drop the component $T$ and
write $\compTID{e}$ and $\pos{e}$ for short.

Given a trace $T$, we can also access an event at a certain position $k$.
We define $\accTr{T}{k} = e$ if $e \in T$ where $\posP{T}{e} = k$.

For trace $T$, we define $\events{T} = \{ e \mid \exists T_1,T_2,j. T = T_1 \pp [\thread{j}{e}] \pp T_2 \}$
to be the set of events in $T$.
We write $e \in T$ if $e \in \events{T}$.

For trace $T$, we define $\proj{i}{T} = T'$ the projection of $T$ onto thread $j$
where (1) for each $e \in T$ where $\compTIDP{T}{e} = i$ we have that $e \in T'$, and
(2) for each $e, f \in T'$ where $\posP{T'}{e} < \posP{T'}{f}$ we have that $\posP{T}{e} < \posP{T}{f}$.
That is, the projection onto a thread comprised of all events in that thread
and the program order remains the same.

Besides accurate trace positions, we demand that
acquire and release events are in a proper acquire/release order.
For each acquire there must be a matching release in the same thread
and atomic sections cannot overlap. In case an acquire event lacks
a matching release because the program has been terminated prematurely,
we assume a dummy release event.

\begin{definition}[Proper Acquire/Release Order]
\label{def:proper-acq-rel-order}  
  We say a trace $T$ enjoys a \emph{proper acquire/release order}
  iff the following conditions are satisfied:
  \begin{itemize}
   \item For each $\thread{i}{\lockEE{y}{j_1}} \in T$
  there exists $\thread{i}{\unlockEE{y}{j_2}} \in T$
  where $j_1 < j_2$. For the event with the smallest position $j_2$,
  we have that no other acquire/release event on $y$
  occurs in between trace positions $j_1$ and $j_2$.
   \item For each $\thread{i}{\unlockEE{y}{j_2}} \in T$
  there exists $\thread{i}{\lockEE{y}{j_1}} \in T$
  where $j_1 < j_2$.
  For the event with the greatest position $j_1$,
  we have that no other acquire/release event on $y$
  occurs in between trace positions $j_1$ and $j_2$.
  \end{itemize}
\end{definition}

We say a trace $T$ is \emph{well-formed} iff
trace positions in $T$ are accurate and $T$
enjoys a proper acquire/release order.

Each well-formed trace implies a happens-before relation
among events. We follow~\cite{Mathur:2018:HFR:3288538:3276515}
and employ a happens-before relation that
guarantees that all trace reorderings that satisfy
this happens-before relation are schedulable.

\begin{definition}[Schedulable Happens-Before~\cite{Mathur:2018:HFR:3288538:3276515}]
\label{def:happens-before}  
  Let $T$ be a well-formed trace.
  We define a relation $\hbP{T}{}{}$ among trace events
  as the smallest partial order such that the following holds:
  \begin{description}
  \item[Program order (PO):]
    Let $e, f \in T$. Then, $\hbP{T}{e}{f}$ iff
    $\compTID{e} = \compTID{f}$ and $\pos{e} < \pos{f}$.
  \item[Write-read dependency (WRD):]
    Let $\readEE{x}{j}, \writeEE{x}{k} \in T$.
    Then, $\hbP{T}{\writeEE{x}{j}}{\readEE{x}{k}}$ iff
    $j < k$ 
    and
    for all $e\in T$ where
    $j < \pos{e}$ 
    and $\pos{e} < k$ 
    we find that $e$ is not a write event on $x$.
  \item[Release-acquire dependency (RAD):] \mbox{}
    \\
    Let $\unlockEE{y}{j}, \lockEE{y}{k} \in T$. Then, $\hbP{T}{\unlockEE{y}{j}}{\lockEE{y}{k}}$ iff
    $j < k$  where
    $\compTID{\unlockEE{y}{j}} \not= \compTID{\lockEE{y}{k}}$ and
    for all $e \in T$ where $j < \pos{e}$, $\pos{e} < k$
    and $\compTID{\unlockEE{y}{j}} \not= \compTID{e}$ we find that $e$
    is not an acquire event on $y$.
  \end{description}
  We refer to $\hbP{T}{}{}$ as the \emph{schedulable happens-before}
  relation. We generally say happens-before for short.
  If trace $T$ is fixed based on the context,
  we write $\hb{}{}$ for short.

  We say two events $e, f \in T$ are \emph{concurrent} to each other
  if neither $\hbP{T}{e}{f}$, nor $\hbP{T}{f}{e}$ holds.
  In such a situation, we write $\notHB{e}{f}$.
\end{definition}

Like earlier happens-before relations,
e.g.~consider~\cite{flanagan2010fasttrack},
we demand that events must be ordered
based on the program order (PO).
An acquire event
must happen after the nearest release event in the trace (RAD).
As observed in~\cite{Mathur:2018:HFR:3288538:3276515},
write-read dependencies (WRD) must be respected.
Otherwise, events are not necessarily schedulable.
Consider the following example
taken from~\cite{Mathur:2018:HFR:3288538:3276515}.

\begin{example}
  Consider the well-formed trace
  $$T = [\thread{1}{\readEE{x}{1}}, \thread{1}{\writeEE{y}{2}},
    \thread{2}{\readEE{y}{3}}, \thread{2}{\writeEE{x}{4}}]$$
  Based on the schedulable happens-before relation, we have that
  $\hbP{T}{\writeEE{y}{2}}{\readEE{y}{3}}$.
  In combination with the program order,
  we find find that $\hbP{T}{\readEE{x}{1}}{\writeEE{x}{4}}$.
  So, the read on $x$ happens before the write on $x$.
  Hence, there is no race that involves variable $x$.

  This is a conservative approximation of the program's behavior.
  The specific value $y$ read at trace position $3$ may
  affect the program's control flow. Hence, the earlier
  write on $y$ must remain in the (relative) same position
  with respect to the subsequent read.

    The FastTrack algorithm~\cite{flanagan2010fasttrack}
  ignores WRD relations and therefore may yield false positives.
  Without WRD the following trace reordering is possible:
  $
 [ \thread{2}{\readEE{y}{3}}, \thread{2}{\writeEE{x}{4}},
     \thread{1}{\readEE{x}{1}}, \thread{1}{\writeEE{y}{2}} ]
 $.
 But there is no schedule resulting from some program run
  under which we encounter this sequence of events.
\end{example}

To summarize.
The schedulable happens-before relation ensures that all data races
are indeed schedulable. We give a more precise description
of data races.

\begin{definition}[Read/Write Events]
  Let $T$ be a well-formed trace.
  We define $\rwTx$ as the set of all read/write
  events in $T$ on some variable $x$.  

  Let $M \subseteq T$ be a subset of events in $T$.
  Then, we define $\extract{M}{\rwTx} = M \cap \rwTx$.
\end{definition}

\begin{definition}[Data Races]
  \label{def:data-race}
  Let $T$ be a well-formed trace.
  Let $x$ be some variable and $e, f \in \rwTx$ be
  two read/write events on $x$.
  
  We say that $(e, f)$ are in a \emph{write-write data race}
  if $e$ and $f$ are both write events
      and $e$ and $f$ are concurrent to each other.

   We say that $(e, f)$ are in a \emph{write-read data race}
   if $e$ is a write event and $f$ is a read event
   where either (1) $e$ and $f$ are concurrent to each other, or
   (2a) $\hbP{T}{e}{f}$ and
   (2b) $\neg\exists e' \in T. \hbP{T}{e}{e'} \wedge \hbP{T}{e'}{f}$, and
   (2c) $\compTID{e} \not = \compTID{f}$.

   We denote by $\allRaces$ the set of all
   pairs of events $(e,f)$ where $e, f \in T$
   and $(e,f)$ are in write-write or write-read data race relation.
\end{definition}  
Our definition of a data race implies that the trace can be reordered
such that both (conflicting) events appear next to each other in the trace.
This clearly applies for races where events involved are concurrent to each other.
In case of a write-read dependency, the read must follow the write.
This leads to a race if no other event appears in between (2b),
and the write and read take place in different threads (2c).

\begin{example}
  Consider the following trace where we use a tabular notation.
  Events belonging to a specific threads appear in a separate column.
  \bda{ll|l|l}
  & \thread{1}{} & \thread{2}{} & \thread{3}{}
  \\ \hline
  1. & \writeE{x} &&
  \\
  2. & & \writeE{x} &
  \\
  3. & & \readE{x} &
  \\
  4. & & & \readE{x}   
  \eda
  We find that
  $\allRaces = \{
  (\writeEE{x}{1}, \writeEE{x}{2}),
  (\writeEE{x}{1}, \readEE{x}{3}),
  (\writeEE{x}{2}, \readEE{x}{4}) \}$
  where we omit symmetric cases.
  For the pairs $(\writeEE{x}{1}, \writeEE{x}{2})$ and
  $(\writeEE{x}{1}, \readEE{x}{3})$, the events involved are concurrent to each other.
  The pair $(\writeEE{x}{2}, \readEE{x}{4})$ is in write-read dependency relation
  and satisfies conditions (2a-c)
  Definition~\ref{def:data-race}.
  The pair $(\writeEE{x}{2}, \readEE{x}{3})$ is also in a write-read dependency
  relation but only satisfies (2b) and not (2c).
  Hence, this pair is not part of $\allRaces$.
\end{example}  

\section{SHB Algorithm}
\label{sec:standard-vc}

We give a recast of the \SHB\ vector clock algorithm
introduced in~\cite{Mathur:2018:HFR:3288538:3276515}
to predict data races under the schedulable happen-before relation.
In addition, we show how to extend \SHB\ to identify races
due to write-read dependencies.
   
Algorithm~\ref{alg:vc-race-checker}, referred to as the \SHB\ algorithm,
processes events in a stream-based fashion.
The algorithm maintains several vector clocks.

For each thread $i$ we maintain a vector clock $\threadVC{i}$.
For each variable $x$, we use a vector clock $\writeVC{x}$
to maintain the write access history to $x$.
Similarly, we use $\readVC{x}$ to maintain the read access history.
We use a vector clock $\lastWriteVC{x}$ to maintain the last write access
as in the order specified in the trace.
Similarly, for each mutex $y$, we use
vector clock $\lockVC{y}$ to maintain the last
release event on $y$.

Initially, for each vector clock $\threadVC{i}$
all time stamps are set to 0 but position $i$ where the time stamp
is set to 1.
For $\writeVC{x}$, $\readVC{x}$, $\lastWriteVC{x}$ and $\lockVC{y}$
all time stamps are set to 0.

We define some helper functions
to access and update the time stamp of a specific thread as well as
 a (point-wise) join  operation of two vector clocks.
   \bda{lcl}
   \accVC{[i_1,\dots,i_{j-1}, i_j, i_{j+1}, \dots, i_n]}{j} & = & i_j
   \\ \updateVC{[i_1,\dots,i_n]}{j}{k} & = & [i_1,\dots,i_{j-1}, k, i_{j+1}, \dots, i_n]
   \\   \supVC{[i_1,\dots, i_n]}{[j_1,\dots,j_n]} & = & [\maxN{i_1}{j_1},\dots,\maxN{i_n}{j_n}]
   \eda
It is easy to see that the join operation is associative.
Hence, we will write $\supVC{V_1}{\supVC{V_2}{V_3}}$
as a short-hand for $\supVC{V_1}{(\supVC{V_2}{V_3})}$.   

We write $\incC{V}{i}$ as a short-hand for $V := \updateVC{V}{i}{\accVC{V}{i}+1}$.
We write
$$\raceVC{V_1}{V_2}$$
as a short-hand for
``if $\neg \lteVC{V_1}{V_2}$ then race detected''
where $\lteVC{[i_1,\dots, i_n]}{[j_1,\dots,j_n]}$ iff $i_1 \leq j_1 \wedge \dots \wedge i_n \leq j_n$.

We consider the various cases of Algorithm~\ref{alg:vc-race-checker}.
For acquire and release events, parameter $i$ refers to the thread id
and $x$ refers to the name of the mutex.
Similarly, for writes and reads, $i$ refers to the thread id
and $x$ refers to the name of the variable.
Parameter $k$ refers to the trace position.
This is parameter is only necessary for reads and writes.
We use the trace position to uniquely identify each event.

Vector clocks are updated as follows.
In case of an acquire event
we synchronize the thread's vector clock with the most recent (prior) release
event by building the union of the vector clocks
$\threadVC{i}$ and $\lockVC{x}$.
In case of a release event, we update $\lockVC{x}$.

In case of a write event, 
we compare the thread's vector clock
against the read and write histories
$\readVC{x}$ and $\writeVC{x}$ to check for
a write-read and write-write race.
Then, we update $\lastWriteVC{x}$ to record the vector clock
of the most recent write on $x$.
For the write history $\writeVC{x}$, we update the time stamp
at position $i$ to the thread's time stamp at that position.

In case of a read event,  check for read-write races by comparing
the thread's vector clock against $\writeVC{x}$.
Only then we synchronize the thread's vector clock
with the vector clock $\lastWriteVC{x}$ of the most recent write.
The history of reads is updated similarly as in case of writes.

\begin{algorithm}
  \caption{\SHB\ algorithm}\label{alg:vc-race-checker}

\begin{algorithmic}[1]
\Procedure{acquire}{$i,x$}
\State $\threadVC{i} = \supVC{\threadVC{i}}{\lockVC{x}}$
\EndProcedure
\end{algorithmic}

\begin{algorithmic}[1]
\Procedure{write}{$i,x, k$}
\State $\raceVC{\writeVC{x}}{\threadVC{i}}$
\State $\raceVC{\readVC{x}}{\threadVC{i}}$  
\State $\lastWriteVC{x} = \threadVC{i}$
\State $\updateVC{\writeVC{x}}{i}{\accVC{\threadVC{i}}{i}}$
\State $\incC{\threadVC{i}}{i}$
\EndProcedure
\end{algorithmic}

\begin{algorithmic}[1]
\Procedure{release}{$i,x$}
\State $\lockVC{x} = \threadVC{i}$  
\State $\incC{\threadVC{i}}{i}$
\EndProcedure
\end{algorithmic}

\begin{algorithmic}[1]
\Procedure{read}{$i,x, k$}
\State $\raceVC{\writeVC{x}}{\threadVC{i}}$    
\State $\threadVC{i} = \supVC{\threadVC{i}}{\lastWriteVC{x}}$
\State $\updateVC{\readVC{x}}{i}{\accVC{\threadVC{i}}{i}}$
\State $\incC{\threadVC{i}}{i}$
\EndProcedure
\end{algorithmic}

\end{algorithm}

\begin{example}
\label{ex:shb-incomplete}  
  We consider a run of the \SHB\ algorithm.
  The example does not involve
  any mutex and the last write always takes place in the same thread.
  Hence, the components $\lockVC{x}$
  and $\lastWriteVC{x}$ can be ignored.
  We underline events for which a call to $\raceVC{}{}$ issues a race.
  The subscript indicates if the event is in a race with
  a read (r) or a write (w).

  For presentation purposes, we first show the annotated trace
  where the columns $\writeVC{x}$ and $\readVC{x}$
  follow below.
  
    \bda{lll|ll|ll}
  & \thread{1}{} & [1,0,0]
  & \thread{2}{} & [0,1,0]
  & \thread{3}{} & [0,0,1]
  \\ \hline
  1. & \writeE{x} & [1,0,0]
     &&
     &&
  \\
  2. & \readE{x} & [2,0,0]
  &&
  &&
  \\
  3. &&
  & \underline{\writeE{x}}_{rw} & [0,1,0] &
  &
  \\
  4. &&
  & \underline{\readE{x}}_{w} & [0,2,0] &
  \\
  5. &&
  &&
  & \underline{\readE{x}}_{w} & [0,0,1]
  \eda

  \bda{lll}
  & \writeVC{x}
  & \readVC{x}
  \\ \hline
  1. 
    & [1,0,0] &
  \\
  2. 
    && [2,0,0] 
  \\
  3. 
    & [1,1,0] &
  \\
  4. 
   && [2,2,0]
  \\
  5. 
  && [2,2,1]
  \eda

  We first find a write followed by a read
  and update the thread's vector clock as well as
  $\writeVC{x}$ and $\readVC{x}$ accordingly.
  As both events are in the same thread, there is no race issued.

  In the third step, we find another write.
  The event is underlined with $rw$ as
  both calls
  $$\raceVC{\writeVC{x}}{[0,1,0]}$$
  and
  $$\raceVC{\readVC{x}}{[0,1,0]}$$
  issue a race.

  The read in  the fourth step is in a race with a write.
  The same applies to the read in the fifth step.

  We observe that events $\writeEE{x}{1}$
  and $\readEE{x}{2}$ are part of a race but not underlined.
  For example, the vector clock $[1,0,0]$ of $\writeEE{x}{1}$
  and the vector clock $[0,1,0]$ of $\writeEE{x}{3}$
  are incomparable. Hence, both events form a write-write data race pair.
\end{example}

Based on the above example,
we conclude that the \SHB\ algorithm reports
\emph{some} events that are involved in a race but not \emph{all}.

The time and space complexity of the \SHB\ algorithm 
is $O(n*k)$ where $n$ is the length of the trace and $k$ the number of threads.
We assume that each vector clock requires $O(k)$ space
and comparing two vector clocks takes time $O(k)$.
For each event $n$ we maintain a constant number of vector clocks including
some comparisons.  Hence, $O(n*k)$.

\section{\SHBEE\ Algorithm}
\label{sec:all-concurrent-writes-reads}

Our goal is to identify all pairs of events that are in a race
and identify all such conflicting pairs for the given schedule.
We focus on conflicting pairs where the events involved are concurrent
to each other. The special case
of write-read races due to write-read dependencies is dealt with by
the adaptation described above.
To achieve our goal we require two phases.

\begin{algorithm}
  \caption{\SHBEE\ Algorithm}\label{alg:vc-edges-conc-read-write}

\begin{algorithmic}[1]
\Procedure{acquire}{$i,x$}
\State $\threadVC{i} = \supVC{\threadVC{i}}{\lockVC{x}}$
\EndProcedure
\end{algorithmic}

\begin{algorithmic}[1]
\Procedure{release}{$i,x$}
\State $\lockVC{x} = \threadVC{i}$  

\State $\incC{\threadVC{i}}{i}$
\EndProcedure
\end{algorithmic}

\begin{algorithmic}[1]  
\Procedure{write}{$i,x, k$}
\State $\vcEvt = \{ (k,\threadVC{i}) \} \cup \vcEvt$  
\State $\concEvt{x} = 
    \{ (\thread{j}{k}, \thread{i}{\accVC{\threadVC{i}}{i}})
    \mid \thread{j}{k} \in \rwVC{x}
          \wedge k > \accVC{\threadVC{i}}{j} \} \cup \concEvt{x}$
\State $\edges{x} = \edges{x} \cup
       \{ \thread{j}{k} \gtEdge \thread{i}{\accVC{\threadVC{i}}{i}}
       \mid \thread{j}{k} \in \rwVC{x} \wedge
       k < \accVC{\threadVC{i}}{j} \}$
\State $\rwVC{x} = \{ \thread{i}{\accVC{\threadVC{i}}{i}} \}       
         \cup \{ \thread{j}{k} \mid \thread{j}{k} \in \rwVC{x} \wedge
         k > \accVC{\threadVC{i}}{j} \}$
\State $\lastWriteVC{x} = \threadVC{i}$         
\State $\incC{\threadVC{i}}{i}$
\EndProcedure
\end{algorithmic}

\begin{algorithmic}[1]
\Procedure{read}{$i,x, k$}
\State $\threadVC{i} = \supVC{\threadVC{i}}{\lastWriteVC{x}}$
\State $\vcEvt = \{ (k,\threadVC{i}) \} \cup \vcEvt$  
\State $\concEvt{x} = 
    \{ (\thread{j}{k}, \thread{i}{\accVC{\threadVC{i}}{i}})
    \mid \thread{j}{k} \in \rwVC{x} 
          \wedge k > \accVC{\threadVC{i}}{j} \} \cup \concEvt{x}$  
\State $\edges{x} = \edges{x} \cup
       \{ \thread{j}{k} \gtEdge \thread{i}{\accVC{\threadVC{i}}{i}}
       \mid \thread{j}{k} \in \rwVC{x} \wedge
            k < \accVC{\threadVC{i}}{j} \}$
\State $\rwVC{x} = \{ \thread{i}{\accVC{\threadVC{i}}{i}} \}
       \cup \{ \thread{j}{k} \mid \thread{j}{k} \in \rwVC{x} \wedge
            k > \accVC{\threadVC{i}}{j} \}$
\State $\incC{\threadVC{i}}{i}$
\EndProcedure
\end{algorithmic}

\end{algorithm}

The first phase is carried out by Algorithm~\ref{alg:vc-edges-conc-read-write},
referred to as the \SHBEE\ algorithm.
The (second) post-processing phase is described in the upcoming section.

Like \SHB, algorithm \SHBEE\ employs vector clocks $\threadVC{i}$, $\lockVC{x}$
and $\lastWriteVC{x}$. In addition, \SHBEE\ outputs three types of sets
where two sets are indexed by shared variable $x$:
$\concEvt{x}$, $\edges{x}$ and $\vcEvt$.
Set $\concEvt{x}$ holds pairs of concurrent reads/writes.
As motivated in Section~\ref{sec:our-idea}, via a linear pass through the events
the set $\concEvt{x}$ might not necessarily include all pairs of events that are in a race.
However, the missing pairs are reachable via edge constraints accumulated by $\edges{x}$.
Edge constraints are sound but not complete w.r.t.~the happens-before relation.
Hence, we might need to filter out some candidate pairs.
For filtering, we require the set $\vcEvt$ where $\vcEvt$ records for each event its vector clock,
The details of post-processing based on these three sets are explained in the upcoming section.

Besides the three sets $\concEvt{x}$, $\edges{x}$ and $\vcEvt$, \SHBEE\ maintains some other set $\rwVC{x}$
to record the most recent concurrent set of reads/writes.
Initially, all sets are empty.

The treatment of acquire and release is the same as in case of \SHB.
In case of a write event, we record the writer's vector clock by updating
$\vcEvt$. We use the trace position to uniquely identify each event
and thus record its associated vector clock
as pairs in $\vcEvt$.
We record the most recent write by updating $\lastWriteVC{x}$.
Each read event, synchronizes with $\lastWriteVC{x}$ to ensure
that write-read dependencies are respected.

Next, we consider the update of sets $\concEvt{x}$ and $\rwVC{x}$.
Recall that each event can be identified by its epoch and vice versa.
When processing event $e$ in thread $i$ with vector clock $\threadVC{i}$,
the epoch associated to $e$ is $\thread{i}{\accVC{\threadVC{i}}{i}}$.
We add new pairs to $\concEvt{x}$
by comparing the current epoch (event) $\thread{i}{\accVC{\threadVC{i}}{i}}$
against epochs $\thread{j}{k}$ in $\rwVC{x}$.
A new pair is added if $k > \accVC{\threadVC{i}}{j}$.
Similarly, we adjust the set $\rwVC{x}$.
The epoch $\thread{i}{\accVC{\threadVC{i}}{i}}$
is added to $\rwVC{x}$ and we only keep epoch $\thread{j}{k}$
in the set $\rwVC{x}$ 
if $k > \accVC{\threadVC{i}}{j}$.
The treatment is the same for writes and reads.

For each read and write, we add an edge from
an epoch $\thread{j}{k}$ in $\rwVC{x}$
to the currently being processed
epoch $\thread{i}{\accVC{\threadVC{i}}{i}}$
if $k < \accVC{\threadVC{i}}{j}$.

\subsection{Properties}

The following result establishes that 
the events (epochs) reported in $\rwVC{x}$
and the pairs of events (epochs) reported in $\concEvt{x}$ are concurrent
to each other.

\begin{proposition}
\label{prop:time-stamp-vc-concurrent}  
  Let $T$ be a well-formed trace and $x$ some variable.
  Then, for any subtrace $T' \subseteq T$
  the sets $\concEvt{x}$ and $\rwVC{x}$ obtained
  after running Algorithm~\ref{alg:vc-edges-conc-read-write} on $T'$
  enjoy the following properties.
  All events in $\rwVC{x}$ are concurrent to each other
  and for each $(e,f) \in \concEvt{x}$ we have that $e$ and $f$ are
  concurrent to each other.
\end{proposition}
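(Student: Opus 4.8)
The plan is to reduce the statement to two standard facts about the vector clocks that \SHBEE\ maintains and then close with a short induction over the processed prefix. The first fact is \textbf{(A) position consistency:} if $\hbP{T}{a}{b}$ then $\pos{a} < \pos{b}$. This follows by induction on the derivation of $\hbP{T}{}{}$, since each generating clause of Definition~\ref{def:happens-before} (PO, WRD, RAD) relates an event of smaller trace position to one of larger position, and this forward-directedness is preserved under transitive closure. The second is \textbf{(B) soundness of the encoding:} if $\hbP{T}{g}{e}$ for $g$ in thread $j$ and $e$ in thread $i$, then $\lteVC{V_g}{V_e}$, where $V_g, V_e$ are the vector clocks recorded for $g$ and $e$ in $\vcEvt$. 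Since the epoch of $g$ is $\thread{j}{\accVC{V_g}{j}}$, the stored time stamp $k$ equals $\accVC{V_g}{j}$, so (B) gives in particular $k \le \accVC{V_e}{j}$. Fact (B) is obtained from the invariant that the thread clocks track happens-before: program order via the per-thread increments, release--acquire via the join of $\threadVC{i}$ with $\lockVC{x}$ on acquire, and write--read via the join of $\threadVC{i}$ with $\lastWriteVC{x}$ on read, each propagated through the point-wise joins. Crucially, the thread/lock/last-write updates of \SHBEE\ are \emph{identical} to those of \SHB, so the thread vector clocks computed here coincide with \SHB's, and (B) is just the soundness half of \SHB's vector-clock characterization of $\hbP{T}{}{}$~\cite{Mathur:2018:HFR:3288538:3276515}.

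The heart of the argument is a single-comparison concurrency lemma. \emph{Claim:} when \SHBEE\ processes an event $e$ in thread $i$ whose (possibly just synchronized) vector clock is $V_e = \threadVC{i}$, and an epoch $\thread{j}{k} \in \rwVC{x}$ satisfies $k > \accVC{\threadVC{i}}{j}$, then the event $g$ with epoch $\thread{j}{k}$ and the event $e$ are concurrent. For one direction, $k = \accVC{V_g}{j} > \accVC{\threadVC{i}}{j} = \accVC{V_e}{j}$ shows that $\lteVC{V_g}{V_e}$ fails, so by the contrapositive of (B) we cannot have $\hbP{T}{g}{e}$. For the other direction, $g$'s epoch already lies in $\rwVC{x}$ when $e$ is processed, and epochs enter $\rwVC{x}$ only upon processing their own event, so $g$ was processed strictly before $e$ in the trace stream and hence $\pos{g} < \pos{e}$; by (A) this rules out $\hbP{T}{e}{g}$. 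Therefore $\notHB{g}{e}$. Note that only the soundness direction of (B) is needed, since the second half of concurrency is delivered for free by position order.

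It remains to apply the claim. Every pair placed into $\concEvt{x}$ has the form $(\thread{j}{k}, \thread{i}{\accVC{\threadVC{i}}{i}})$ under the guard $k > \accVC{\threadVC{i}}{j}$, so by the claim its two events are concurrent; and since the happens-before relation between two fixed events does not change as further events are processed (every connecting path stays between their positions by (A)), the pair remains concurrent in $T'$. For $\rwVC{x}$ I would argue by induction on the processed prefix. Initially $\rwVC{x} = \emptyset$. On processing $e$, the set becomes $\{\thread{i}{\accVC{\threadVC{i}}{i}}\} \cup \{\thread{j}{k} \in \rwVC{x} \mid k > \accVC{\threadVC{i}}{j}\}$: the retained old epochs are pairwise concurrent by the induction hypothesis, the fresh epoch is concurrent to each retained one by the claim (same guard), while same-thread and already-ordered epochs fail the guard and are discarded. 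Hence the updated $\rwVC{x}$ is again pairwise concurrent.

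The main obstacle is fact (B), which is where the real content sits; the surrounding induction is routine. The delicate point is the write--read clause: \SHBEE\ synchronizes a read only with $\lastWriteVC{x}$, the immediately preceding write on $x$, rather than with all earlier accesses, and one must verify that this single join realizes exactly the WRD edges of Definition~\ref{def:happens-before} --- that is, that $\lastWriteVC{x}$ always holds the vector clock of the unique WRD-predecessor write. This is precisely the subtlety that distinguishes the schedulable happens-before from the classic one. It is, however, already discharged by \SHB's correctness, which transfers verbatim because the thread-clock manipulations are unchanged; this is what lets me invoke (B) without re-deriving it here.
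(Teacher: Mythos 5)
Your proposal is correct and follows essentially the same route as the paper: the paper's proof reduces the proposition to its Lemma on the epoch-vs-vector-clock criterion (trace position rules out $\hbP{T}{f}{e}$, and vector-clock soundness of the thread/lock/last-write updates rules out $\hbP{T}{e}{f}$ when $k > \accVC{V}{j}$), which is exactly your single-comparison concurrency claim built from facts (A) and (B). You merely spell out more explicitly what the paper leaves implicit, namely the soundness invariant (B) inherited from \SHB\ and the routine induction showing $\rwVC{x}$ stays pairwise concurrent across prefixes.
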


We note that by construction,
for each $(e,f) \in \concEvt{x}$
we have that $e$ appears before $f$ in the trace.
That is, $\pos{e} < \pos{f}$.

As shown by the following example,
the set $\concEvt{x}$ does not necessarily contain  all concurrent pairs of events for
the schedule implied by $\hbP{T}{}{}$.

\begin{example}
\label{ex:conc-missed}  
  Consider the following annotated trace after
  processing via Algorithm~\ref{alg:vc-edges-conc-read-write}.
  For clarity, epochs are annotated with their corresponding events,
  e.g.~$\evt{\thread{1}{1}}{w_1}$.

  \bda{ll|l|lll}
    & \thread{1}{} & \thread{2}{} & \rwVC{x} & \concEvt{x}
  \\ \hline 1. & \writeE{x} [1,0] &        & \{ \evt{\thread{1}{1}}{w_1} \} &
  \\ 2. & \writeE{x} [2,0] &        & \{ \evt{\thread{1}{2}}{w_2} \} &
  \\ 3. &                  & \writeE{x} [0,1] & \{ \evt{\thread{1}{2}}{w_2}, \evt{\thread{2}{1}}{w_3} \} &
     \{ (\evt{\thread{1}{2}}{w_2}, \evt{\thread{2}{1}}{w_3}) \}
  \eda  

  Algorithm~\ref{alg:vc-edges-conc-read-write} reports the concurrent
  pair $(\evt{\thread{1}{2}}{w_2}, \evt{\thread{2}{1}}{w_3})$
  but fails to report the (missing) concurrent pair
  $(\evt{\thread{1}{1}}{w_1}, \evt{\thread{2}{1}}{w_3})$.
\end{example}  

While $\concEvt{x}$ lacks certain pairs of concurrent events,
we can provide for a sufficient condition
under which a pair $(e,f)$ is added to $\concEvt{x}$.
In essence, a pair $(e,f)$
is added to $\concEvt{x}$ whenever $f$ has no other concurrent partner
that appears within $e$ and $f$ in the trace.

\begin{lemma}
\label{le:no-other-conc-in-between}
Let $T$ be a well-formed trace.
Let $e,f \in \rwTx$ for some variable $x$
such that (1) $e$ and $f$ are concurrent to each other,
(2) $\pos{f} > \pos{e}$, and
(3) $\neg\exists g \in \rwTx$ where $g$ and $f$ are concurrent
to each other and $\pos{f} > \pos{g} > \pos{e}$.
Let $\concEvt{x}$ be the set obtained by running
Algorithm~\ref{alg:vc-edges-conc-read-write} on $T$
Then, we find that $(e,f) \in \concEvt{x}$.
\end{lemma}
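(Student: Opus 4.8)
The plan is to track the fate of $e$'s epoch in the set $\rwVC{x}$ from the moment $e$ is processed up to the moment $f$ is processed, show that this epoch is still present in $\rwVC{x}$ when \SHBEE\ handles $f$, and conclude that the pair $(e,f)$ is inserted into $\concEvt{x}$ in that very step. I will lean on two auxiliary facts. First, a \emph{vector-clock characterization}: for read/write events $a,b$ on $x$ with $\pos{a} < \pos{b}$, writing $\thread{j_a}{k_a}$ for $a$'s epoch and $V$ for $b$'s thread vector clock at the time $b$ is processed, we have $\hb{a}{b}$ iff $k_a \le \accVC{V}{j_a}$. Since \SHBEE\ performs exactly the same updates to $\threadVC{i}$, $\lockVC{x}$ and $\lastWriteVC{x}$ as \SHB, this is the standard soundness/completeness of the construction. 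Second, \emph{position-monotonicity} of happens-before: inspecting the clauses PO, WRD and RAD of Definition~\ref{def:happens-before} shows that every generating pair goes from a smaller to a larger trace position, so $\hb{a}{b}$ implies $\pos{a} < \pos{b}$, and this is preserved under transitive closure.

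First I would record how $\rwVC{x}$ evolves. The only procedures that touch $\rwVC{x}$ are \textsc{read} and \textsc{write} on $x$ (acquire/release leave it untouched), and when such a $g'$ in thread $i'$ is processed, the $\rwVC{x}$-update retains an epoch $\thread{j}{k} \in \rwVC{x}$ exactly when $k > \accVC{\threadVC{i'}}{j}$, i.e.\ by the characterization exactly when $\neg \hb{a}{g'}$ for the event $a$ carrying epoch $\thread{j}{k}$. Since $e$'s own epoch $\thread{j_e}{k_e}$ is inserted into $\rwVC{x}$ in the step processing $e$, the only way it can disappear before $f$ is processed is that some $g' \in \rwTx$ with $\pos{e} < \pos{g'} < \pos{f}$ satisfies $\hb{e}{g'}$ and thereby fails the keep-condition.

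The heart of the argument is to exclude such a $g'$ using hypothesis (3). Suppose such a $g' \in \rwTx$ with $\pos{e} < \pos{g'} < \pos{f}$ and $\hb{e}{g'}$ existed. If $\hb{g'}{f}$ held, transitivity would give $\hb{e}{f}$, contradicting that $e$ and $f$ are concurrent; hence $\neg\hb{g'}{f}$. Also $\neg\hb{f}{g'}$, because $\hb{f}{g'}$ would force $\pos{f} < \pos{g'}$ by position-monotonicity, contradicting $\pos{g'} < \pos{f}$. Thus $g'$ and $f$ are concurrent, $g' \in \rwTx$, and $\pos{f} > \pos{g'} > \pos{e}$ --- precisely the witness forbidden by (3). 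This contradiction shows $e$'s epoch survives in $\rwVC{x}$ until $f$ is processed. I expect this survival step, in particular invoking position-monotonicity to kill $\hb{f}{g'}$ and then matching the resulting concurrent $g'$ against clause (3), to be the main obstacle.

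It then remains to check that the surviving epoch triggers an insertion. When $f$ in thread $i_f$ with vector clock $V_f$ is processed, the $\concEvt{x}$-update inspects the old $\rwVC{x}$ (the reassignment of $\rwVC{x}$ happens only afterwards), which still contains $\thread{j_e}{k_e}$, and adds $(\thread{j_e}{k_e}, \thread{i_f}{\accVC{V_f}{i_f}})$ whenever $k_e > \accVC{V_f}{j_e}$. Since $e$ and $f$ are concurrent we have $\neg\hb{e}{f}$, and because $\pos{e} < \pos{f}$ the characterization yields exactly $k_e > \accVC{V_f}{j_e}$. Concurrency also forces $\compTID{e} \neq \compTID{f}$, so $j_e \neq i_f$ and the two epochs are distinct, making the added pair the intended $(e,f)$. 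Hence $(e,f) \in \concEvt{x}$, as claimed.
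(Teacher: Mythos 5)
Your proof is correct and follows essentially the same route as the paper's: assume $e$'s epoch is evicted from $\rwVC{x}$ before $f$ is processed, show the evicting event $g'$ would have to be concurrent to $f$ (ruling out $\hb{g'}{f}$ by transitivity and $\hb{f}{g'}$ by trace position), contradict hypothesis (3), and then use $\notHB{e}{f}$ to conclude $k_e > \accVC{V_f}{j_e}$ so that $(e,f)$ is inserted into $\concEvt{x}$. Your added care about the order of updates within the \textsc{write}/\textsc{read} procedures and the explicit epoch--vector-clock characterization (the paper's Lemmas~\ref{le:time-stamp-vc-concurrent} and~\ref{le:time-stamp-vc-hb}) only make the argument more precise.
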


Recall Example~\ref{ex:conc-missed}.
The reported pair $(w_2, w_3)$ satisfies this property.
However, $(w_1, w_3)$ is not reported because
$w_2$ appears in between.
As we show in the up-coming
section, such missing pairs can be reached via edge constraints
because edges approximate the happens-before relation.

\begin{proposition}[Soundness of Edge Constraints]
\label{le:edge-sound}  
  Let $T$ be a well-formed trace and $x$ be some variable.
  Let $\edges{x}$ be the set of edge constraints obtained by
  running Algorithm~\ref{alg:vc-edges-conc-read-write} on $T$.
  Then, for each $e,f \in T$ where $e \gtEdge^* f$
  based on the edges in $\edges{x}$ we find that $\hbP{T}{e}{f}$.
\end{proposition}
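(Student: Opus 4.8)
The plan is to reduce the claim about the transitive closure $\gtEdge^*$ to a statement about single edges, and then to connect each single edge to the vector-clock ordering by invoking the soundness of the clocks maintained by Algorithm~\ref{alg:vc-edges-conc-read-write}. Since $\hbP{T}{}{}$ is by definition a partial order, and hence transitive, it suffices to prove the single-edge statement: whenever an edge $\thread{j}{k} \gtEdge \thread{i}{t}$ is added to $\edges{x}$, its two endpoints satisfy the happens-before relation. A routine induction on the length of a $\gtEdge$-chain $e = g_0 \gtEdge \cdots \gtEdge g_m = f$ (with $m \ge 1$, since $\gtEdge^*$ is the transitive, not reflexive-transitive, closure) then yields $\hbP{T}{e}{f}$ by transitivity.

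For the single-edge step I would first record how such an edge arises. Inspecting the \textsc{write} and \textsc{read} procedures, an edge $\thread{j}{k} \gtEdge \thread{i}{t}$ with $t = \accVC{\threadVC{i}}{i}$ is emitted only while processing an event $e$ of thread $i$ whose vector clock at that moment is $V_e = \threadVC{i}$, and only when $\thread{j}{k} \in \rwVC{x}$ and $k < \accVC{V_e}{j}$. Two observations then suffice. First, every epoch residing in $\rwVC{x}$ at the point where edges are computed was inserted during a strictly earlier iteration, because in both procedures the update of $\rwVC{x}$ follows the edge computation; hence the source event $g = \thread{j}{k}$ was processed before $e$, so $g \neq e$ and $\pos{g} < \pos{e}$. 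Second, the emission condition $k < \accVC{V_e}{j}$ in particular gives $k \le \accVC{V_e}{j}$.

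The crux is the vector-clock characterization: for a previously processed event $g$ with epoch $\thread{j}{k}$ and an event $e$ processed with thread clock $V_e$, the inequality $k \le \accVC{V_e}{j}$ implies $\hbP{T}{g}{e}$. This is the soundness direction of the vector clocks, which \SHBEE\ computes exactly as \SHB\ does (join on acquire via $\lockVC{x}$, propagation of the last write via $\lastWriteVC{x}$ on reads, and a per-event self-increment); I would invoke the established correctness of \SHB~\cite{Mathur:2018:HFR:3288538:3276515} for the schedulable happens-before relation, noting that \SHBEE\ alters none of these clock updates. Concretely: if $j = i$ then $g$ and $e$ lie in the same thread with $g$ earlier, so $\hbP{T}{g}{e}$ holds by program order; if $j \neq i$, then $\accVC{V_e}{j} \ge k$ records that $e$ has, transitively through release--acquire and write--read synchronization, absorbed a thread-$j$ event of time stamp at least $k$, so that event happens-before $e$, and by program order within thread $j$ together with transitivity $\hbP{T}{g}{e}$ follows. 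Combining this with the chain induction completes the argument.

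I expect the main obstacle to be the vector-clock characterization rather than the combinatorial skeleton. Making it airtight requires an invariant stating that throughout the run $\accVC{\threadVC{i}}{j}$ equals the largest time stamp of a thread-$j$ event happening-before (or, for $j=i$, equal to) the event currently processed in thread $i$, and in particular that the $\lastWriteVC{x}$ synchronization on reads faithfully encodes the write--read dependency clause and the $\lockVC{x}$ synchronization the release--acquire clause. Care must also be taken with the off-by-one between an event's epoch time stamp (read before the self-increment) and the subsequent increment, so that the boundary case $k = \accVC{V_e}{j}$ is handled consistently and same-thread edges coincide exactly with program-order edges. Since these clock updates are inherited verbatim from \SHB, I would discharge this lemma by appeal to the correctness already proved for that algorithm, leaving only the short reduction above as genuinely new work.
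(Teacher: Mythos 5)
Your proposal is correct and takes essentially the same route as the paper: reduce $\gtEdge^*$ to primitive edges via transitivity of $\hbP{T}{}{}$, note that each emitted edge $\thread{j}{k} \gtEdge \thread{i}{t}$ satisfies $\pos{g} < \pos{e}$ and $k < \accVC{V_e}{j}$, and conclude happens-before from the epoch-versus-vector-clock comparison, which the paper packages as Lemma~\ref{le:time-stamp-vc-hb} (ruling out the reverse order by trace position and concurrency by the forced synchronization with thread $j$). Your version is somewhat more explicit about the chain induction and the clock-soundness invariant that must be inherited from \SHB, but the substance is identical.
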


\subsection{Time and Space Complexity}

Let $n$ be the size of the trace $T$
and $k$ be the number of threads.
We assume that the number of distinct variables $x$ is a constant.
We consider the time and space complexity of running \SHBEE.

The size of the vector clocks and the set $\rwVC{x}$ is
bounded by $O(k)$.
In each step of \SHBEE,
adjustment of vector clocks takes time $O(k)$.
Adjustment of sets $\concEvt{x}$, $\edges{x}$ and $\rwVC{x}$
requires to consider $O(k)$ epochs where each
comparison among epochs is constant.
So, in each step this requires time $O(k)$.
Adjustment of set $\vcEvt$ takes constant time.
Overall, Algorithm~\ref{alg:vc-edges-conc-read-write}
runs in time $O(n*k)$.

We consider the space complexity.
The sets $\vcEvt$, $\concEvt{x}$ and $\edges{x}$
take space $O(n*k)$.
This applies to $\vcEvt$ because
for each event the size of the vector clock is $O(k)$.
Each element in $\concEvt{x}$ and $\edges{x}$ requires
constant space. In each step, we may add $O(k)$ new elements
because the size of $\rwVC{x}$ is bounded by $O(k)$.
Overall, Algorithm~\ref{alg:vc-edges-conc-read-write}
requires space $O(n*k)$.

\section{\SHBEE\ Post-Processing}
\label{sec:all-concurrent-writes-reads-post-processing}

Based on the sets $\concEvt{x}$ and $\edges{x}$ computed by
\SHBEE\ we compute all remaining concurrent reads/writes.
The important property is that all pairs of concurrent reads/writes
are either already contained in $\concEvt{x}$ or can be reached via some edges.

\begin{definition}[All Concurrent Reads/Writes]
\label{def:conc-reads-writes}  
   Let $T$ be a well-formed trace and $x$ be some variable.
   We define $\allConcs{x} = \{ (e, f) \mid
   e, f \in \rwTx \wedge \notHB{e}{f} \wedge \pos{e} < \pos{f} \}$
   the set of all reads/writes on $x$ that are concurrent
   to each other.
\end{definition}  
It is clear that if $(e,f)$ is a concurrent pair,
so is the pair $(f,e)$. For technical reasons,
we only keep the pair where the first component
appears first in the trace.

\begin{lemma}
\label{le:reach-all-pairs}
  Let $T$ be a well-formed trace and $x$ be some variable.
  Let $\concEvt{x}$ be the set of concurrent pairs of events
  and $\edges{x}$ be the set of edge constraints obtained by
  running Algorithm~\ref{alg:vc-edges-conc-read-write} on $T$.
  Let $(e,f) \in \allConcs{x}$ where $(e,f) \not\in \concEvt{x}$
  and $\pos{e} < \pos{f}$.
  Then, there exists $g$ such that $e \gtEdge^* g$
  and $(g,f) \in \concEvt{x}$.
\end{lemma}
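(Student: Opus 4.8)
The plan is to induct on the index gap $\pos{f}-\pos{e}$, tracking the membership of $e$'s epoch in the working set $\rwVC{x}$ from the step at which $e$ is processed to the step at which $f$ is processed. The engine of the argument is the dichotomy encoded in the \textsc{write}/\textsc{read} cases of Algorithm~\ref{alg:vc-edges-conc-read-write}: when an event in thread $i$ with clock $\threadVC{i}$ is handled, an epoch $\thread{j}{k}\in\rwVC{x}$ is kept in $\rwVC{x}$ and paired with the current event in $\concEvt{x}$ exactly when $k>\accVC{\threadVC{i}}{j}$, i.e.\ exactly when it is concurrent with the current event; otherwise it is evicted and an edge $\thread{j}{k}\gtEdge\thread{i}{\accVC{\threadVC{i}}{i}}$ is recorded, reflecting that $\thread{j}{k}$ happens before the current event.

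First I would show that $e$ cannot survive in $\rwVC{x}$ until $f$ is processed. It enters $\rwVC{x}$ at position $\pos{e}$; if it were still present when $f$ (say in thread $i$ with clock $\threadVC{i}$) is handled, then from $\notHB{e}{f}$ and $\pos{e}<\pos{f}$ we get $k>\accVC{\threadVC{i}}{j}$ for $e=\thread{j}{k}$, so the pairing step would add $(e,f)$ to $\concEvt{x}$, contradicting the hypothesis $(e,f)\notin\concEvt{x}$. Hence $e$ is evicted by some event $h$ with $\pos{e}<\pos{h}<\pos{f}$; since $\rwVC{x}$ is touched only in the \textsc{write}/\textsc{read} cases for $x$, we have $h\in\rwTx$, and (as verified below) the eviction is accompanied by an edge $e\gtEdge h$. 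The eviction means $\hbP{T}{e}{h}$, and I claim $\notHB{h}{f}$: were $\hbP{T}{h}{f}$, transitivity of happens-before would yield $\hbP{T}{e}{f}$, contradicting $\notHB{e}{f}$; and $\hbP{T}{f}{h}$ is impossible because happens-before respects trace order while $\pos{h}<\pos{f}$.

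This reduces $(e,f)$ to the concurrent pair $(h,f)$, which has $\pos{h}<\pos{f}$, a strictly smaller gap $\pos{f}-\pos{h}<\pos{f}-\pos{e}$, and is reached from $e$ by the single edge $e\gtEdge h$. If $(h,f)\in\concEvt{x}$, take $g=h$, so that $e\gtEdge^{*}g$ and $(g,f)\in\concEvt{x}$. Otherwise the induction hypothesis applied to $(h,f)$ produces $g$ with $h\gtEdge^{*}g$ and $(g,f)\in\concEvt{x}$, and composing gives $e\gtEdge h\gtEdge^{*}g$, i.e.\ $e\gtEdge^{*}g$. Strict decrease of the gap guarantees termination; Lemma~\ref{le:no-other-conc-in-between} supplies the terminal case, since once no read/write on $x$ concurrent with $f$ lies strictly between the current left endpoint and $f$, that pair already belongs to $\concEvt{x}$.

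The step I expect to be delicate is the promise that every eviction of an epoch that is still concurrent with $f$ is witnessed by an edge. For program order and release--acquire the current clock strictly exceeds the evicted epoch's timestamp, so the recorded edge is immediate. The subtle case is a write--read dependency: a read first synchronizes through $\lastWriteVC{x}$ and may raise $\accVC{\threadVC{i}}{j}$ to exactly the timestamp $k$ of the evicted write, so one must check that the eviction condition and the edge-recording condition coincide at this boundary, ensuring the edge $e\gtEdge h$ is inserted and the chain is not broken. Confirming this comparison, together with the epoch characterization of happens-before (so that comparing $k$ with $\accVC{\threadVC{i}}{j}$ faithfully separates concurrency from happens-before for the current event), is the heart of the proof; the surrounding induction is then routine.
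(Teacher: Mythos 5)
Your overall strategy is the paper's own argument in different clothing: the paper constructs the eviction chain $e \gtEdge g_1 \gtEdge \dots \gtEdge g_n$ explicitly and argues concurrency with $f$ only for the final element that survives in $\rwVC{x}$ until $f$ is processed, whereas you recurse on the positional gap and re-establish concurrency with $f$ at every intermediate evictor. Both versions hinge on the same two facts: an epoch concurrent with $f$ that is still resident in $\rwVC{x}$ when $f$ is processed gets paired with $f$ in $\concEvt{x}$, and every eviction leaves behind an edge in $\edges{x}$.

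The second fact is exactly the step you flag as ``delicate'' and do not discharge, and it is a genuine gap rather than a routine boundary check. In Algorithm~\ref{alg:vc-edges-conc-read-write} an epoch $\thread{j}{k} \in \rwVC{x}$ is kept (and paired) when $k > \accVC{\threadVC{i}}{j}$, but an edge is emitted only when $k < \accVC{\threadVC{i}}{j}$; these conditions are not complementary, and the residual case $k = \accVC{\threadVC{i}}{j}$ does occur, precisely through the write--read synchronization you point to: $\lastWriteVC{x}$ stores the writer's clock \emph{before} its increment, so a reader in another thread whose only knowledge of thread $j$ comes through that write ends up with $\accVC{\threadVC{i}}{j}$ equal to $k$. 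Concretely, for $T = [\thread{1}{\writeEE{x}{1}}, \thread{2}{\readEE{x}{2}}, \thread{3}{\writeEE{x}{3}}]$ the read evicts $\writeEE{x}{1}$ from $\rwVC{x}$ with no edge recorded, $\concEvt{x}$ ends up containing only $(\readEE{x}{2},\writeEE{x}{3})$, $\edges{x}$ is empty, and the concurrent pair $(\writeEE{x}{1},\writeEE{x}{3})$ is unreachable --- so the dichotomy your induction relies on (and which the paper's proof simply asserts with ``as $g_1$ has removed $e$, there must exist $e \gtEdge g_1 \in \edges{x}$'') fails for the algorithm as literally written. Closing the argument requires either making the edge condition the exact complement of the keep condition, i.e.\ $k \le \accVC{\threadVC{i}}{j}$ (still sound, since equality across distinct threads forces a synchronization and hence $\hbP{T}{e}{h}$, though Lemma~\ref{le:time-stamp-vc-hb} would then also need to cover equality), or proving that the equality case cannot arise for epochs still resident in $\rwVC{x}$ --- and the example shows the latter is false. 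You correctly identified the heart of the proof; as it stands your argument, like the paper's, leaves that heart unproved, and it is the one place where the claim needs repair rather than mere verification.
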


Based on results stated in Lemmas~\ref{le:no-other-conc-in-between}
and~\ref{le:reach-all-pairs}, we can effectively
compute all concurrent writes/reads by scanning
through the sets $\concEvt{x}$ and $\rwVC{x}$.

We use symbols $\e, \f, \g$ to denote epochs.
As we know, each epoch uniquely corresponds to an event and vice versa.
Hence, for epoch $\e$, we write $\pos{\e}$ to obtain
the trace position of the event that corresponds to $\e$.

\begin{definition}[\SHBEE\ Post-Processing]
\label{def:all-concs-post}  
  Let $\concEvt{x}$ and $\edges{x}$ be obtained
  by running Algorithm~\ref{alg:vc-edges-conc-read-write}
  on some well-formed trace $T$.

  We first introduce a total order among pairs of epochs
  $(\e,\f) \in \concEvt{x}$ and $(\e',\f') \in \concEvt{x}$.
  We define $(\e,\f) < (\e',\f')$ if $\pos{\e} < \pos{\e'}$.
  This defines a total order among all pairs of epochs
  where the event corresponding to the epoch in the first
  position appears before the event corresponding
  to the epoch in the second position.
  
  Then, repeatedly perform the following steps where we initially
  assume that $\accConcs{x} := \{ \}$.
  
  \begin{enumerate}
  \item If $\concEvt{x} = \{\}$ stop.
  \item Otherwise, let $(\e,\f)$ be the smallest element in $\concEvt{x}$.
  \item Let $G=\{\g_1,\dots,\g_n\}$ be maximal such that
    $\g_1 \gtEdge \e, \dots, \g_n \gtEdge \e \in \edges{x}$ and
    $\pos{\g_1} < \dots < \pos{\g_n}$.
  \item $\accConcs{x} := \{ (\e,\f) \} \cup \accConcs{x}$.
  \item $\concEvt{x} := \{ (\g_1,\f), \dots, (\g_n,\f) \} \cup (\concEvt{x} - \{ (\e,\f) \})$.
  \item Repeat.
\end{enumerate}    

\end{definition}  

\begin{theorem}
  \label{th:all-concs-post}
  Let $T$ be a well-formed trace of size $n$.
  Let $x$ be a variable.
  Then, construction of $\accConcs{x}$ takes time $O(n*n)$
  and $\allConcs{x} \subseteq \accConcs{x}$.
\end{theorem}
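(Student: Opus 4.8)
The plan is to prove the statement as two independent claims: the \emph{coverage} property $\allConcs{x} \subseteq \accConcs{x}$ and the \emph{time bound} $O(n*n)$. Coverage rests entirely on Lemma~\ref{le:reach-all-pairs} together with an argument that the repeated one-step edge expansion in Definition~\ref{def:all-concs-post} realizes the full transitive reachability $\gtEdge^*$. The time bound rests on bounding both the number of pairs ever processed and the number of edges in $\edges{x}$.

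For coverage, I would first record the basic invariant of the loop: a pair leaves $\concEvt{x}$ only by being moved into $\accConcs{x}$ in step 5, and the loop halts only when $\concEvt{x}$ is empty. Hence \emph{every} pair that is ever present in $\concEvt{x}$ — whether initially or inserted during an expansion — eventually ends up in $\accConcs{x}$, provided the loop terminates. Termination follows from Proposition~\ref{le:edge-sound}: since $\g \gtEdge \e$ implies $\hbP{T}{\g}{\e}$, and every happens-before edge respects trace order (each defining clause PO, WRD, RAD, and hence its transitive closure, forces $\pos{\e} > \pos{\g}$ from source to target), we have $\pos{\g} < \pos{\e}$ for every edge. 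Thus each expansion step $(\e,\f) \mapsto (\g_i,\f)$ strictly decreases the position of the first component, so no chain of expansions can cycle. Now take any $(e,f)\in\allConcs{x}$ with $\pos{e}<\pos{f}$. If $(e,f)\in\concEvt{x}$ initially, it is moved to $\accConcs{x}$ and we are done. Otherwise Lemma~\ref{le:reach-all-pairs} gives a $g$ with $e \gtEdge^* g$ and $(g,f)\in\concEvt{x}$. Writing a concrete edge path $e = \g_m \gtEdge \cdots \gtEdge \g_0 = g$, an induction on $m$ shows that processing $(g,f)$ inserts $(\g_1,f)$, processing $(\g_1,f)$ inserts $(\g_2,f)$, and so on, so that $(e,f)=(\g_m,f)$ is eventually inserted into $\concEvt{x}$ and therefore into $\accConcs{x}$.

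For the time bound, I would first observe that the number of distinct pairs ever processed is $O(n*n)$, since each component is a read/write on $x$ and there are at most $n$ of these. The crucial structural fact is that $|\edges{x}| = O(n)$: inspecting the \textsc{write}/\textsc{read} cases of Algorithm~\ref{alg:vc-edges-conc-read-write}, an epoch receives an \emph{outgoing} edge precisely when it is dropped from $\rwVC{x}$ by a later superseding event, and each epoch is dropped at most once; hence every epoch has out-degree at most one and there are at most $n$ edges. Out-degree one also guarantees that each pair $(\e,\f)$ is generated from at most one source pair, so no pair is processed twice. Storing $\edges{x}$ as predecessor adjacency lists, the work of step 3 for a processed pair $(\e,\f)$ is $O(\deg^-(\e))$. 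Summing over processed pairs and grouping by the fixed second component $\f$, the cost for each $\f$ is at most $\sum_{\e}\deg^-(\e)=|\edges{x}|=O(n)$; with at most $n$ distinct values of $\f$ this gives $O(n*n)$ overall, and the set bookkeeping (extracting the smallest pair, insertions) fits within the same bound using a suitable ordered structure.

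I expect the main obstacle to be the coverage argument, specifically justifying that the \emph{one-step} expansion of step 3, iterated under the global smallest-first order, faithfully enumerates every $\gtEdge^*$-ancestor of each starting epoch. The smallest-first order is a potential red herring — since pairs leave $\concEvt{x}$ only by entering $\accConcs{x}$, completeness is in fact independent of the processing order, and the real content is the termination/acyclicity argument together with the inductive path-tracing above. On the complexity side, the one delicate point is the tightened edge count $|\edges{x}|=O(n)$, sharper than the $O(n*k)$ space bound used earlier, which is exactly what turns a naive $O(n*n*k)$ estimate into the claimed $O(n*n)$.
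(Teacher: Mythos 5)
Your argument is correct in substance and follows the paper's overall decomposition — establish termination and that pairs are not revisited, charge $O(1)$ amortized work to each of the $O(n*n)$ candidate pairs, and obtain coverage from Lemma~\ref{le:reach-all-pairs} — but the key sub-arguments are genuinely different. For termination the paper argues that no pair is ever \emph{added} twice, using the trace-position ordering of pair components and an appeal to Lemma~\ref{le:no-other-conc-in-between} (read in its converse direction, which is not literally what that lemma states); you instead note that every edge $\g \gtEdge \e$ satisfies $\pos{\g} < \pos{\e}$, so each expansion strictly decreases the first component's position and the process is well-founded regardless of whether pairs repeat. That is a cleaner and more robust termination argument. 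For the no-reprocessing property that the time bound actually needs, you supply a structural fact the paper never states: an epoch is dropped from $\rwVC{x}$ at most once, hence has out-degree at most one in $\edges{x}$, hence $|\edges{x}| = O(n)$ and each pair has at most one expansion parent. This is what makes your per-$\f$ charging argument ($\sum_{\e} \deg^-(\e) = |\edges{x}| = O(n)$ for each fixed $\f$) go through cleanly, whereas the paper simply asserts constant-time predecessor lookup over $O(n*n)$ pairs. Your explicit induction along the edge path $e = \g_m \gtEdge \cdots \gtEdge \g_0 = g$ also fills in a step the paper compresses into a single citation of Lemma~\ref{le:reach-all-pairs}, and your remark that the smallest-first order is irrelevant to completeness is accurate.

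One point you should tighten: out-degree one rules out a pair being \emph{generated} from two distinct parents, but not a pair being both \emph{initially} present in $\concEvt{x}$ and later re-generated from some processed $(\e',\f)$ with $\e \gtEdge \e'$, which is exactly the case the paper's ``do we re-encounter $(e,f)$'' step is trying to exclude. This is closable directly from the algorithm: $(\e,\f) \in \concEvt{x}$ initially means $\e$ was still in $\rwVC{x}$ when $\f$ was processed, whereas $\e \gtEdge \e'$ with $\pos{\e'} < \pos{\f}$ means $\e$ had already been dropped from $\rwVC{x}$ before $\f$ was reached — a contradiction, so the two cases are mutually exclusive and each pair is processed at most once. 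Without this observation your claim ``no pair is processed twice'' is not fully justified; with it, your proof is complete and, on this point, more rigorous than the paper's own.
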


The set $\accConcs{x}$ is a superset of $\allConcs{x}$
because pairs $(\e, \f)$ added to $\accConcs{x}$
may not necessarily be concurrent to each other.
For space reasons, we refer
to Appendix~\ref{sec:edges-overapproximate} for an example.
We can easily eliminate such cases
by comparing $\e$'s time stamp against the time stamp of $\f$'s vector clock.

\begin{proposition}[Eliminate Non-Concurrent Pairs]
\label{prop:elim-non-conc-pairs}  
  Let $x$ be some variable.
  Let $(\thread{i}{k}, \thread{j}{l}) \in \accConcs{x}$
  Let $\pos{\thread{j}{l}} = m$ and $(m, V) \in \vcEvt$.
  Then, remove $(\thread{i}{k}, \thread{j}{l})$ from $\accConcs{x}$
  if $k < \accVC{V}{j}$.
  Applying this check to all pairs in $\accConcs{x}$
  yields that $\accConcs{x} = \allConcs{x}$.
\end{proposition}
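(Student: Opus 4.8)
The plan is to reduce the claim to a single local test. Since Theorem~\ref{th:all-concs-post} already establishes $\allConcs{x} \subseteq \accConcs{x}$ for the set produced by the post-processing of Definition~\ref{def:all-concs-post}, the whole proposition amounts to understanding exactly which pairs the filter deletes. Concretely, I would prove two facts about the filtered set: (a) every pair that survives the test is genuinely concurrent, giving $\accConcs{x} \subseteq \allConcs{x}$; and (b) no genuinely concurrent pair is ever deleted, so the inclusion $\allConcs{x} \subseteq \accConcs{x}$ from Theorem~\ref{th:all-concs-post} is preserved. Together these yield $\accConcs{x} = \allConcs{x}$. Thus the key is to show that the stated comparison fires precisely on the \emph{non}-concurrent pairs.

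First I would pin down the structure of the pairs in $\accConcs{x}$. By the construction in Definition~\ref{def:all-concs-post}, every pair $(\e,\f)$ ever placed in $\accConcs{x}$ has $\pos{\e} < \pos{\f}$: the seed pairs drawn from $\concEvt{x}$ satisfy this (as noted after Proposition~\ref{prop:time-stamp-vc-concurrent}), and each replacement pair $(\g_m,\f)$ comes from an edge $\g_m \gtEdge \e$, which by Proposition~\ref{le:edge-sound} forces $\hb{\g_m}{\e}$ and hence $\pos{\g_m} < \pos{\e} < \pos{\f}$. I would then observe that the schedulable happens-before relation orders events only forward in the trace: each base clause of Definition~\ref{def:happens-before} (PO, WRD, RAD) requires the earlier trace position to precede the later one, and this is preserved under transitive closure. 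Consequently $\hb{\f}{\e}$ is impossible for any $(\e,\f) \in \accConcs{x}$, so $\e$ and $\f$ are either concurrent or satisfy $\hb{\e}{\f}$. This collapses the dichotomy ``concurrent vs.\ not'' to the single question of whether $\hb{\e}{\f}$ holds.

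It then remains to show that the comparison in the statement decides exactly this question. Writing $\e = \thread{i}{k}$ and recovering $\f$'s vector clock $V$ from $\vcEvt$ (using $\pos{\f} = m$ and $(m,V) \in \vcEvt$), I would invoke the defining property of the vector clocks built by Algorithm~\ref{alg:vc-edges-conc-read-write}: $V$ faithfully represents $\f$'s happens-before knowledge, and the epoch $\thread{i}{k}$ captures $\e$ in full. Hence $\hb{\e}{\f}$ holds iff $\f$'s clock already subsumes $\e$, and this subsumption is detected by the single epoch-against-clock test in the statement, namely by comparing the time stamp $k$ of $\e$ against the entry of $V$ belonging to $\e$'s thread. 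So the removal test deletes precisely the pairs with $\hb{\e}{\f}$, i.e.\ precisely the non-concurrent pairs; surviving pairs are exactly the concurrent ones. Combined with the structural observation above and Theorem~\ref{th:all-concs-post}, both inclusions follow.

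The main obstacle is this epoch-subsumption step: I must argue that a single-component comparison between $\e$'s time stamp and one entry of $\f$'s vector clock is equivalent to the full componentwise comparison $V_\e \sqsubseteq V_\f$ that characterizes $\hb{\e}{\f}$. This rests on the invariant that, for the event $\e$ residing in thread $i$, the value $k$ is the largest $i$-time stamp occurring anywhere below $\e$ in happens-before, so that any $\f$ whose $i$-component reaches $k$ necessarily dominates $V_\e$ in every coordinate. I would establish this invariant by induction over the trace, tracking how Algorithm~\ref{alg:vc-edges-conc-read-write} increments $\threadVC{i}$ and joins clocks at reads (via $\lastWriteVC{x}$) and at acquires (via $\lockVC{x}$); particular care is needed to confirm that the clock recorded in $\vcEvt$ for a read is taken \emph{after} synchronization with $\lastWriteVC{x}$, so that write-read dependencies are already reflected in $V$ when the test is applied.
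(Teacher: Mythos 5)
Your overall strategy coincides with the paper's: take the inclusion $\allConcs{x} \subseteq \accConcs{x}$ from Theorem~\ref{th:all-concs-post}, observe that every pair in $\accConcs{x}$ has its first component earlier in the trace so that (by Lemma~\ref{le:happens-before-vs-trace-position}) concurrency collapses to the single question of whether $\hb{\e}{\f}$, and then decide that question by the epoch-against-clock test. The ``main obstacle'' you isolate --- that a single-component comparison is equivalent to the full clock comparison --- is exactly the content of the paper's auxiliary Lemmas~\ref{le:time-stamp-vc-hb} and~\ref{le:time-stamp-vc-concurrent}, which the paper proves by a short synchronization argument rather than your proposed trace induction; either route is viable.

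There is, however, one concrete soft spot. You claim the test fires precisely on the pairs with $\hb{\e}{\f}$, i.e.\ that $\hb{\e}{\f}$ holds iff $k < \accVC{V}{i}$ (reading the test against $\e$'s thread index, as intended). The ``only if'' direction fails at the boundary $k = \accVC{V}{i}$: in Algorithm~\ref{alg:vc-edges-conc-read-write} the snapshot $\lastWriteVC{x}$ is taken \emph{before} the increment of $\threadVC{i}$, so a read $\f$ that synchronizes with a write $\e$ via $\lastWriteVC{x}$ can end up with $\accVC{V}{i}$ exactly equal to $\e$'s time stamp $k$ while $\hb{\e}{\f}$ holds by the write-read dependency. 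Such a pair would survive your filter yet is not concurrent. The proposition remains true because equality pairs never actually enter $\accConcs{x}$ --- seed pairs from $\concEvt{x}$ satisfy the strict inequality $k > \accVC{V}{i}$ by construction, and the same must be checked for pairs manufactured through edge predecessors during post-processing --- but your argument does not establish this. That check is precisely what the paper's proof supplies with its remark that for a retained pair ``the time stamps cannot be equal,'' which is needed before Lemma~\ref{le:time-stamp-vc-concurrent} (whose hypothesis is strict) can be invoked to conclude that all surviving pairs are concurrent.
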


\begin{example}
  \label{ex:shb-recall}
  We consider a run of \SHBEE.
  We omit vector clocks and the component $\vcEvt$.
  Instead of epochs, we refer to the corresponding event
  and its trace position.
  For $\concEvt{x}$ we write down the
  newly added elements in each step. For space reasons,
  we omit an extra column for $\edges{x}$
  and report this set below.
  \bda{ll|l|l|ll}
  & \thread{1}{}
  & \thread{2}{}
  & \thread{3}{}
  & \rwVC{x}
  & \concEvt{x}
  \\ \hline
  1. & \writeE{x}
  &
  &
  & \{ \writeEE{x}{1} \}
  &
  \\
  2. & \readE{x}
  &
  &
  & \{ \readEE{x}{2} \}
  &
  \\
  3. &
  & \writeE{x}
  &
  & \{ \readEE{x}{2}, \writeEE{x}{3} \}
  &  ( \readEE{x}{2}, \writeEE{x}{3}) 
  \\
  4. &
  & \readE{x}
  &
  & \{ \readEE{x}{2}, \readEE{x}{4} \}
  & (\readEE{x}{2}, \readEE{x}{4})  
  \\
  5. &
  &
  & \readE{x}
  & \{ \readEE{x}{2}, \readEE{x}{4}, \readEE{x}{5} \}
  & (\readEE{x}{2}, \readEE{x}{5})
  \\
   &&&&
    (\readEE{x}{4}, \readEE{x}{5})
  \eda
  where
  $\edges{x} =
  \{ \writeEE{x}{1} \gtEdge \readEE{x}{2}
  , \writeEE{x}{3} \gtEdge \readEE{x}{4} \}$.
  
  So, after processing the trace
  we obtain
  \bda{lcl}
  \concEvt{x} & = & \{
           (\readEE{x}{2}, \writeEE{x}{3}),
           (\readEE{x}{2}, \readEE{x}{4}),
  \\ & &
           (\readEE{x}{2}, \readEE{x}{5}),
           (\readEE{x}{4}, \readEE{x}{5})
  \}
  \eda
 and
 $$\edges{x} = \{
 \writeEE{x}{1} \gtEdge \readEE{x}{2},
 \writeEE{x}{3} \gtEdge \readEE{x}{4}
 \}$$

 We apply the post-processing described in Definition~\ref{def:all-concs-post} 
 to obtain the set $\accConcs{x}$.
 Elimination as described in Proposition~\ref{prop:elim-non-conc-pairs}
 is not necessary for this example.

 Via $(\readEE{x}{2}, \writeEE{x}{3})$ and edge $\writeEE{x}{1} \gtEdge \readEE{x}{2}$
 we add the pair $(\writeEE{x}{1}, \writeEE{x}{3})$.
 Via similar reasoning, we finally obtain
 \bda{lcl}
  \accConcs{x} & = & \{ (\readEE{x}{2}, \writeEE{x}{3}),
           (\readEE{x}{2}, \readEE{x}{4}),
           (\readEE{x}{2}, \readEE{x}{5}),
  \\ && (\readEE{x}{4}, \readEE{x}{5}),
  (\writeEE{x}{1}, \writeEE{x}{3}),
        (\writeEE{x}{1}, \readEE{x}{4}),
   \\ &&       (\writeEE{x}{1}, \readEE{x}{5}),
        (\writeEE{x}{3}, \readEE{x}{5})
  \}
 \eda
\end{example}  

\subsection{Data Race Optimization}
\label{sec:all-data-races}

The set $\concEvt{x}$ and the construction in Definition~\ref{def:all-concs-post}
also includes pairs of concurrent reads. Such pairs are generally not interesting
as they do not represent a data race.
However, we cannot simply ignore read-read pairs.
Recall Example~\ref{ex:shb-recall}
where via the pair $(\readEE{x}{2}, \readEE{x}{5})$
we obtain $(\writeEE{x}{1}, \readEE{x}{5})$.

\begin{algorithm}
\caption{\SHBEE\ Algorithm (Data Race Optimization)}\label{alg:vc-edges-conc-read-write-opt}

\begin{algorithmic}[1]
\Procedure{read}{$i,x, k$}
\State $\threadVC{i} = \supVC{\threadVC{i}}{\lastWriteVC{x}}$
\State $\vcEvt = \{ (k,\threadVC{i}) \} \cup \vcEvt$  
\State $\concEvt{x} = 
    \{ (\thread{j}{k}, \thread{i}{\accVC{\threadVC{i}}{i}})
    \mid \thread{j}{k} \in \rwVC{x} 
          \wedge k > \accVC{\threadVC{i}}{j} \wedge \mbox{$\thread{j}{k}$ is a write} \} \cup \concEvt{x}$  
\State $\edges{x} = \edges{x} \cup
       \{ \thread{j}{k} \gtEdge \thread{i}{\accVC{\threadVC{i}}{i}}
       \mid \thread{j}{k} \in \rwVC{x} \wedge
            k < \accVC{\threadVC{i}}{j} \}$
\State $\rwVC{x} = \{ \thread{i}{\accVC{\threadVC{i}}{i}} \}
       \cup \{ \thread{j}{k} \mid \thread{j}{k} \in \rwVC{x} \wedge
            (k > \accVC{\threadVC{i}}{j} \vee \mbox{$\thread{j}{k}$ is a write}  \}$
\State $\incC{\threadVC{i}}{i}$
\EndProcedure
\end{algorithmic}

\end{algorithm}

To ignore read-read pairs during the post-processing construction of all (concurrent) data races,
we need to adapt our method as follows.
In \SHBEE, in case of a read event~$e$, we only remove events from $\rwVC{x}$
if these events happen before~$e$ \emph{and} are not write events.
That is, a write event in $\rwVC{x}$ can only be removed by a subsequent write.
See Algorithm~\ref{alg:vc-edges-conc-read-write-opt}
where we only show the case of read as all other parts are unaffected.

Thus, we effectively build the transitive closure of writes that can be reached via a read.
Hence, any concurrent write-read pair $(e,f)$ that would be obtained
via edge $f \gtEdge g$ and the read-read pair $(f,g)$,
is already present in $\concEvt{x}$.
Hence, there is no need to record read-read pairs in $\concEvt{x}$.

\begin{example}
  We consider the trace from Example~\ref{ex:shb-recall}
  where we use the adapted Algorithm~\ref{alg:vc-edges-conc-read-write-opt}.
  We omit the trace and show only components $\rwVC{x}$ and $\concEvt{x}$.
   \bda{l|ll}
   & \rwVC{x}
   & \concEvt{x}
   \\ \hline
   1. 
   & \{ \writeEE{x}{1} \}
   &
   \\
   2. 
   & \{ \writeEE{x}{1}, \readEE{x}{2} \}
   &
   \\
   3. 
   & \{ \writeEE{x}{1}, \readEE{x}{2}, \writeEE{x}{3} \}
   & (\writeEE{x}{1}, \writeEE{x}{3})
   \\
   &&  ( \readEE{x}{2}, \writeEE{x}{3})
   \\
   4. 
   & \{ \writeEE{x}{1}, \readEE{x}{2}, \writeEE{x}{3}, \readEE{x}{4} \}
   & (\writeEE{x}{1}, \readEE{x}{4})
   \\
   5. 
   & \{ \writeEE{x}{1}, \readEE{x}{2}, \writeEE{x}{3}, \readEE{x}{4}, \readEE{x}{5} \}
   & (\writeEE{x}{1}, \readEE{x}{5})
   \\
   && (\writeEE{x}{3}, \readEE{x}{5})
   \eda
  

  Unlike in the earlier calculations in Example~\ref{ex:shb-recall},
  $\writeEE{x}{1}$ is not eliminated from $\rwVC{x}$ in the second step.
  Hence, we add $(\writeEE{x}{1}, \writeEE{x}{3})$ to $\concEvt{x}$.
  Similarly, in the fourth step we keep $\writeEE{x}{3}$
  and therefore add $(\writeEE{x}{3}, \readEE{x}{5})$ to $\concEvt{x}$.

 After processing the trace we obtain
 \bda{lcl}
   \concEvt{x} & = & \{
 (\writeEE{x}{1}, \writeEE{x}{3}),
 ( \readEE{x}{2}, \writeEE{x}{3}),
   (\writeEE{x}{1}, \readEE{x}{4}),
 \\ & & 
 (\writeEE{x}{1}, \readEE{x}{5}),
 (\writeEE{x}{3}, \readEE{x}{5})
   \}
 \eda

 Post-processing does not yield any further data races for this example.
\end{example}

\subsection{Time and Space Complexity}

We investigate the complexity to predict all data races
that are schedulable as defined by Definition~\ref{def:data-race}.

Algorithm~\ref{alg:vc-edges-conc-read-write-opt}, optimized for data race prediction,
enjoys the same time and space complexity results as \SHBEE.
This is the case because the size of $\rwVC{x}$ is still bounded by $O(k)$.
A read no longer removes a write, so for $k$ concurrent reads we may
have a maximum of $k$ additional writes in $\rwVC{x}$.
Hence, $O(k)$ elements in the worst-case.
Hence, the time complexity of the first phase is $O(n*k)$.

It is easy to integrate the prediction
of all write-read races due to write-read dependencies
without affecting the time and space complexity of Algorithms~\ref{alg:vc-edges-conc-read-write}
and~~\ref{alg:vc-edges-conc-read-write-opt}.

We consider the (second) post-processing phase
where we only consider the time complexity.
Based on the data race optimization describe above, read-read pairs no longer arise.
In practice, this is a huge improvement.
The construction of $\accConcs{x}$ still takes time $O(n*n)$
as stated by Theorem~\ref{th:all-concs-post}
because the worst-case number of pairs to consider remains the same.
The elimination step as described
in Proposition~\ref{prop:elim-non-conc-pairs}
requires to compare time stamps. This takes constant time.
Then, the post-processing phase takes time $O(n*n)$.

In summary, the prediction of all schedulable data races
based on \SHBEE\ and post-processing takes time $O(n*k) + O(n*n)$.

\section{Implementation and Experiments}
\label{sec:implementation}

We have implemented \SHBEE\
including post-processing to obtain all data race pairs.
For comparison, we have also implemented 
\SHB~\cite{Mathur:2018:HFR:3288538:3276515}.
We have also implemented a variant of \SHB\ referred to as \SHBFORALL.
\SHBFORALL\ additionally builds the set $\vcEvt$ that records for each event its vector clock.
In some post-processing phase, we build the set of all data races pairs
by comparing vector clocks for each candidate pair.
This variant has been briefly sketched in~\cite{Mathur:2018:HFR:3288538:3276515}.
The post-processing requires time $O(n*n*k)$ as there are $O(n*n)$ candidate pairs to consider
and each comparison takes time $O(k)$.

The input for all algorithms is a trace in CSV format containing
read, write, acquire, release as well as fork and join events.
Instead of events, we report the code locations connected to each event.
We avoid reporting of repeated code locations.
The same applies to pairs of code locations.

All algorithms are implemented in the Go programming language.
It is in terms of syntax and performance similar to C but offers
garbage collection, memory safety and CSP-style concurrency.
The implementations can be found at \url{https://github.com/KaiSta/shbee}.

\subsection{Benchmarks}

\begin{table*}
    \small
    \begin{tabular}{c|c|c|c|c|c|c|c}
        \hline
        \textbf  & moldyn  & raytracer & xalan & lusearch & tomcat & avrora & h2  \\ \hline
        \textbf{TReplay} (s) & 71 & 0 & 85 & 3 & 37 & 19 & 131\\
        Memory (mb) & 9514 & 46 & 11052 & 576 & 5200 & 2672 & 28168 \\\hline
        
        \textbf{\SHBEE} (s) & 117 & 1 & 105 & 4 & 63 & 53 & 262 \\        
        Memory (mb) & 19415 & 81 & 12566 & 579 & 10404 & 12669 & 59674\\
        Phase1+2 (s) & 64+2 & 0+0 & 11+0 & 1+0 & 24+0 & 36+0 & 135+0 \\
        \#Races & 18+6 &  1+0 & 44+5 & 24+0 & 677+324 & 32+0 & 285+1\\ 
        \hline
        
%
        \textbf{\SHBFORALL} (s) & >1h* & 18 & 1206 & 4 & 957 & >1h* & >1h*\\
        Memory (mb) & 16594 & 69 & 12691 & 585 & 8986 & 4787 & 42420\\
        \#Races & 0+21 &  0+1  & 0+49  & 0+24  & 0+1001  & 0+30 & 0+254\\
        \hline
        
        \textbf{\SHB} (s) & 81 & 0 & 92 & 5 & 54 & 28 & 202\\
        Memory (mb) & 11492 & 55 & 11114 & 586 & 6614 & 3585 & 30072\\ 
        \#Races* & 9 & 1 & 35 & 23 & 492 & 20 &  105 \\\hline

        
        
    \end{tabular}
    \caption{Benchmark Results}
    \label{tab:bench:races}
\end{table*}


We benchmark the performance of \SHBEE, \SHB\ and \SHBFORALL.
For benchmarking we use two Intel Xeon E5-2650 and 64 gb of RAM with Ubuntu 18.04 as operating system.
Following~\cite{Mathur:2018:HFR:3288538:3276515},
we use tests from the Java Grande Forum (\cite{smith2001parallel}) and from the DaCapo (version 9.12, \cite{Blackburn:2006:DBJ:1167473.1167488}) benchmark suite.
All benchmark programs are written in Java and use up to 58 threads.
For instrumentation and tracing we make use of the RoadRunner tool~\cite{flanagan2010roadrunner}.
The entire trace is kept in memory and then (off-line) each algorithm processes the trace.

Table \ref{tab:bench:races} shows the benchmark results.
We measure the time, memory consumption and number of predicted data races.
The first row for each algorithm contains the overall execution time. This includes the start up of the program,
parsing the trace etc. We use the standard `time' program in Ubuntu to measure this time.
The memory consumption is also measured for the complete program and not only for the single algorithms.
We include `TReplay' to measure the time (seconds) and memory consumption (megabytes)
for trace replay without any attached race prediction algorithm.
For example, for the xalan benchmark TReplay takes time 85s and 11052mb.
For \SHB\ we measure 92s and 11114mb. The difference is the actual time and space spent
for data race prediction applying Algorithm~\ref{alg:vc-race-checker}.

For \SHBEE{} we additionally provide the time for each (data race prediction) phase. 
Phase~1 corresponds to running Algorithm~\ref{alg:vc-edges-conc-read-write}
and phase~2 to the post-processing step described in Section~\ref{sec:all-concurrent-writes-reads-post-processing}.
Additional processing steps such as parsing, reporting data races and so on are not included.

For example, the xalan benchmark takes 85 seconds with TReplay while \SHBEE{} takes 105 seconds overall and 11 seconds for phase~1.
The time 0s in phase~2 arises because for most benchmarks data races arise very early during the execution of benchmark programs.
Hence, the post-processing step only needs to cover small portions of the trace and the time spent is negligible.
There is a difference of nine seconds, if we compare the 85 seconds with TReplay against \SHBEE{}'s 105 seconds minus 11 seconds for phase 1+2.
The difference arises because TReplay does not report any data races nor keeps track of any other statistical data.

For the moldyn benchmark, \SHBEE{} requires 64 seconds whereas \SHB's race prediction phase requires 10 seconds
(by subtracting TReplay's 71 seconds from the 81 seconds overall \SHB\ running time).
For the other benchmarks, \SHBEE{}'s phase~1 is comparable with \SHB.
The difference for the moldyn benchmark arises because we predict a huge number of race pairs where the events
involved all refer to the same code locations. For moldyn, \SHBEE{} only reports 18 unique race pairs in phase~1.
The management and elimination of duplicates causes some overhead in \SHBEE. This is something we plan to optimize in the future.

In terms of time and memory consumption, \SHB\ performs best followed.
\SHBEE\ shows competitive performance compared to \SHB.
\SHBFORALL{} does not seem to scale for larger benchmarks.
For moldyn, avrora and h2, we aborted the test after one hour (marked with *).

We consider the number of predicted races.
\SHBEE\ and \SHBFORALL\ report complete pairs of code locations that are in a race.
For both, the number of predicted data races is written as X+Y
where X are the data races found in the first phase and Y those found in the post-processing phase.
In case of \SHBFORALL, X always equals 0 as the calculation of races is completely carried by the post-processing phase.
As can be seen, for most benchmarks, a large portion of race pairs
are already predicted in the first phase via \SHBEE.

If we are only interested in the race pairs predicted in phase~1, we could optimize \SHBEE\ as follows.
We drop $\vcEvt$ and $\edges{x}$ and only maintain $\rwVC{x}$.
Then, we achieve $O(k)$ memory consumption
and it is possible to run \SHBEE\ online (like \SHB).

\subsection{\SHB\ versus \SHBEE}

We carry out a more detailed analysis between \SHB\ and \SHBEE\
regarding the quality and quantity of races reported.
We examine the following questions.
How often does \SHB\ report locations $f$ that are already protected by some mutex?
Recall the discussion from the introduction. Reporting a protected location $f$
is not very helpful in fixing the data race. Rather, we wish to identify the race partner $e$
that lacks protection.

Another interesting question is the following.
How many additional race locations are detected by \SHBEE\ and its post-processing phase?
As argued in the introduction, knowing all race locations is useful in systematically
fixing a buggy schedule.

We consider the first question.
Based on Theorem 4.2. from~\cite{Mathur:2018:HFR:3288538:3276515}, we
observe that for each race location $f$ reported by \SHB\ we have
that \SHBEE\ reports the race pair $(e,f)$ for some location $e$.
Hence, it suffices to consider race pairs $(e,f)$ reported by \SHBEE\ in the following refined analysis.

We employ a variant of \SHBEE\ where for each
read/write event $e$  we compute the set of locks (mutexes) that have been acquired
by the thread by the time we process~$e$.
We refer to this set as $\lockset{e}$. Computation of $\lockset{e}$ is straightforward.
Each thread $i$ maintains $\lockset{i}$.
Each acquire in thread $i$ adds the respective mutex to $\lockset{i}$.
Each release removes the mutex from $\lockset{i}$.
When processing write/read event $e$ in thread $i$, we set $\lockset{e} = \lockset{i}$.

For each data race pair $(e, f)$ reported by \SHBEE\
we must have that $\lockset{e} \cap \lockset{f} = \emptyset$.
To check if $e$, or $f$ or both $e$ and $f$ lack protection, we distinguish
among the following cases:
(C1) $\lockset{e} = \{\}$ and $\lockset{f} \not= \{\}$,
(C2) $\lockset{e} \not= \{\}$ and $\lockset{f} = \{\}$, and
(C3) $\lockset{e} \not= \{\}$, $\lockset{f} \not= \{\}$ and $\lockset{e} \cap \lockset{f} = \{ \}$.

As \SHB\ only reports $f$ for a pair $(e,f)$, case (C1) means that the location reported
by \SHB\ is rather useless for fixing the data race.
In case of (C2) the location reported by \SHB\ is sufficient to fix the race (assuming the fix
involves a mutex). For case (C3), we might also need to inspect the race partner~$e$.

\begin{table}
    \begin{tabular}{l|l|l|l|l}
        Test & C1 & C2 & C3 & \#Race pairs $(e,f)$ \\ \hline
        moldyn & 0 & 0 & 18 & 18\\
        tomcat & 43 & 28 & 606 & 677\\
        xalan  & 1 & 0 & 43 & 44\\
        raytracer & 1 & 0 & 0 & 1\\
        lusearch & 6 & 0 & 18 & 24\\
        avrora & 13 & 4 & 15 & 32\\
        h2 & 8 & 7 & 270 & 285
    \end{tabular}
    \caption{Lockset analysis}
    \label{tab:lockset:results}
\end{table}

We carry out this additional analysis for our benchmark programs
and measure how often the various cases (C1-3) arise.
Results are reported in Table~\ref{tab:lockset:results}.
As can be seen, case (C3) arises most frequently.
Case (C1) arises in general less frequent.
Thanks to the refined analysis provided by our method,
the user can more easily navigate to the source location
that requires fixing.

We examine the second question.
How many additional race locations are detected by \SHBEE\ and its post-processing phase in comparison to \SHB?
For example, consider the trace $[\thread{1}{\writeEE{x}{1}}, \thread{1}{\writeEE{x}{2}}, \thread{2}{\writeEE{x}{3}}]$
for the example in Section~\ref{sec:technical-overview}.
Assuming that trace positions refer to code locations,
\SHB\ reports the location~3. \SHBEE\ reports the locations~2 and~3 because the race pair (2,3) is detected.
The post-processing phase of \SHBEE\ additionally reports location~1 as
post-processing yields the race pair~(1,3).

\begin{table}
    \begin{tabular}{l|c|c|c}
        Test & \SHB\ & \SHBEE\ & \SHBEE\ + post-processing \\ \hline
        moldyn & 9 & 23 & 24\\
        tomcat & 492 & 644 & 709\\
        xalan & 35 & 55 & 55\\
        raytracer & 1 & 1 & 1\\
        lusearch & 23 & 39 & 39\\
        avrora & 20 & 28 & 28\\
        h2 &  105 & 123 & 123
    \end{tabular}
    \caption{Number of distinct locations reported}
    \label{tab:postprocessing:results}
\end{table}

Table~\ref{tab:postprocessing:results} shows these additional analysis results for our benchmark programs.
As can been seen, \SHBEE\ alone (first phase) covers more code locations than \SHB\
and these locations are mostly all locations that are involved in a data race.

\section{Related Works and Conclusion}
\label{sec:related-works}
\label{sec:conclusion}

Vector clocks are the main technical method to establish the happens-before ordering.
Originally, vector clocks were introduced
in the message-passing setting, see works
by Fidge~\cite{fidge1988timestamps} and Mattern~\cite{Mattern89virtualtime}.
In the context of data race prediction,
vector clocks are employed
by Pozniansky and Schuster~\cite{Pozniansky:2003:EOD:966049.781529}.
The FastTrack algorithm by Flanagan and Freund~\cite{flanagan2010fasttrack} employs
an optimized representation of vector clocks where
only the thread's time stamp, referred to as an epoch, need to be traced.
It is folklore knowledge, that vector clock based race predictors
are only sound for the first data race found.
This is due to improper treatment of write-read dependencies that leads
to an overapproximation of the happens-before relation.

ThreadSanitizer (\Tsan)
by Serebryany and Iskhodzhanov~\cite{serebryany2009threadsanitizer}
is a hybrid race predictor that
combines happens-before (for fork-join)
with lockset~\cite{Dinning:1991:DAA:127695:122767}
to identify conflicting memory accesses.
Like FastTrack, \Tsan\ yields potentially false positives.

Banerjee, Bliss, Ma and Petersen~\cite{banerjee2006theory} develop
criteria which data races can be found with a limited (trace) history.
The history options considered  are keeping track of the last event (independent if its a read or a write),
last read and write event, last event for each thread and last write and all concurrent reads.
They observe that none of the limited histories is able to predict all data races.

Mathur, Kini and Viswanathan~\cite{Mathur:2018:HFR:3288538:3276515}
introduce a variant of the standard vector clock algorithm
that properly deals with write-read dependencies.
The thus strengthened happens-before relation is referred to
as the schedulable happens-before relation.
The algorithm in~\cite{Mathur:2018:HFR:3288538:3276515} identifies
some events that are involved in a data race.
The question of how to efficiently infer (all) pairs of events that are in a schedulable data race
is not addressed.

There are several recent works that employ
happens-before methods to derive \emph{further} data races
for as many \emph{alternative} schedules as possible.
See the works Smaragdakis,Evans, Sadowski, Yi and Flanagan~\cite{Smaragdakis:2012:SPR:2103621.2103702}, Kini, Mathur and Viswanathan~\cite{Kini:2017:DRP:3140587.3062374}
as well as Roemer, Gen\c{c} and Bond~\cite{Roemer:2018:HUS:3296979.3192385}.

Huang, Luo and Rosu~\cite{Huang:2015:GGP:2818754.2818856} go a step
further to obtain even more races and trace
values to guarantee that write-read dependencies are respected.
They employ SMT-based solving methods to enumerate as many races as possible.
Recent work by Kalhauge and Palsberg~\cite{Kalhauge:2018:SDP:3288538.3276516}
follows a similar approach. The issue with these methods is that the computational cost
is very high. See the benchmark results reported in~\cite{Kalhauge:2018:SDP:3288538.3276516}.


We attack a different problem that is complementary to the above works.
For a \emph{trace-specific} schedule, we wish to efficiently find \emph{all} pairs of events
that are in a race and that are schedulable
w.r.t.~the happens-before relation defined in~\cite{Mathur:2018:HFR:3288538:3276515}.
Thus, the user is able to systematically examine and fix all data races
for a specific schedule.
Our experiments show that the approach is effective
and provides the user with detailed diagnostic information.




\section*{Acknowledgments}

We thank referees for OOPSLA'19 and MPLR'19
for their helpful comments on previous versions of this paper.

\bibliography{main}


\begin{thebibliography}{17}


\ifx \showCODEN    \undefined \def \showCODEN     #1{\unskip}     \fi
\ifx \showDOI      \undefined \def \showDOI       #1{#1}\fi
\ifx \showISBNx    \undefined \def \showISBNx     #1{\unskip}     \fi
\ifx \showISBNxiii \undefined \def \showISBNxiii  #1{\unskip}     \fi
\ifx \showISSN     \undefined \def \showISSN      #1{\unskip}     \fi
\ifx \showLCCN     \undefined \def \showLCCN      #1{\unskip}     \fi
\ifx \shownote     \undefined \def \shownote      #1{#1}          \fi
\ifx \showarticletitle \undefined \def \showarticletitle #1{#1}   \fi
\ifx \showURL      \undefined \def \showURL       {\relax}        \fi
\providecommand\bibfield[2]{#2}
\providecommand\bibinfo[2]{#2}
\providecommand\natexlab[1]{#1}
\providecommand\showeprint[2][]{arXiv:#2}

\bibitem[\protect\citeauthoryear{Banerjee, Bliss, Ma, and Petersen}{Banerjee
  et~al\mbox{.}}{2006}]%
        {banerjee2006theory}
\bibfield{author}{\bibinfo{person}{Utpal Banerjee}, \bibinfo{person}{Brian
  Bliss}, \bibinfo{person}{Zhiqiang Ma}, {and} \bibinfo{person}{Paul
  Petersen}.} \bibinfo{year}{2006}\natexlab{}.
\newblock \showarticletitle{A theory of data race detection}. In
  \bibinfo{booktitle}{{\em Proceedings of the 2006 workshop on Parallel and
  distributed systems: testing and debugging}}. ACM, \bibinfo{pages}{69--78}.
\newblock


\bibitem[\protect\citeauthoryear{Blackburn, Garner, Hoffmann, Khang, McKinley,
  Bentzur, Diwan, Feinberg, Frampton, Guyer, Hirzel, Hosking, Jump, Lee, Moss,
  Phansalkar, Stefanovi\'{c}, VanDrunen, von Dincklage, and
  Wiedermann}{Blackburn et~al\mbox{.}}{2006}]%
        {Blackburn:2006:DBJ:1167473.1167488}
\bibfield{author}{\bibinfo{person}{Stephen~M. Blackburn},
  \bibinfo{person}{Robin Garner}, \bibinfo{person}{Chris Hoffmann},
  \bibinfo{person}{Asjad~M. Khang}, \bibinfo{person}{Kathryn~S. McKinley},
  \bibinfo{person}{Rotem Bentzur}, \bibinfo{person}{Amer Diwan},
  \bibinfo{person}{Daniel Feinberg}, \bibinfo{person}{Daniel Frampton},
  \bibinfo{person}{Samuel~Z. Guyer}, \bibinfo{person}{Martin Hirzel},
  \bibinfo{person}{Antony Hosking}, \bibinfo{person}{Maria Jump},
  \bibinfo{person}{Han Lee}, \bibinfo{person}{J.~Eliot~B. Moss},
  \bibinfo{person}{Aashish Phansalkar}, \bibinfo{person}{Darko Stefanovi\'{c}},
  \bibinfo{person}{Thomas VanDrunen}, \bibinfo{person}{Daniel von Dincklage},
  {and} \bibinfo{person}{Ben Wiedermann}.} \bibinfo{year}{2006}\natexlab{}.
\newblock \showarticletitle{The DaCapo Benchmarks: Java Benchmarking
  Development and Analysis}. In \bibinfo{booktitle}{{\em Proceedings of the
  21st Annual ACM SIGPLAN Conference on Object-oriented Programming Systems,
  Languages, and Applications}} {\em (\bibinfo{series}{OOPSLA '06})}.
  \bibinfo{publisher}{ACM}, \bibinfo{address}{New York, NY, USA},
  \bibinfo{pages}{169--190}.
\newblock
\showISBNx{1-59593-348-4}
\showDOI{%
\url{https://doi.org/10.1145/1167473.1167488}}


\bibitem[\protect\citeauthoryear{Dinning and Schonberg}{Dinning and
  Schonberg}{1991}]%
        {Dinning:1991:DAA:127695:122767}
\bibfield{author}{\bibinfo{person}{Anne Dinning} {and} \bibinfo{person}{Edith
  Schonberg}.} \bibinfo{year}{1991}\natexlab{}.
\newblock \showarticletitle{Detecting Access Anomalies in Programs with
  Critical Sections}.
\newblock \bibinfo{journal}{{\em SIGPLAN Not.\/}} \bibinfo{volume}{26},
  \bibinfo{number}{12} (\bibinfo{date}{Dec.} \bibinfo{year}{1991}),
  \bibinfo{pages}{85--96}.
\newblock
\showISSN{0362-1340}
\showDOI{%
\url{https://doi.org/10.1145/127695.122767}}


\bibitem[\protect\citeauthoryear{Fidge}{Fidge}{1988}]%
        {fidge1988timestamps}
\bibfield{author}{\bibinfo{person}{Colin~J. Fidge}.}
  \bibinfo{year}{1988}\natexlab{}.
\newblock \showarticletitle{Timestamps in message-passing systems that preserve
  the partial ordering}.
\newblock \bibinfo{journal}{{\em Proceedings of the 11th Australian Computer
  Science Conference\/}} \bibinfo{volume}{10}, \bibinfo{number}{1}
  (\bibinfo{year}{1988}), \bibinfo{pages}{56--66}.
\newblock
\showURL{%
\url{http://sky.scitech.qut.edu.au/~fidgec/Publications/fidge88a.pdf}}


\bibitem[\protect\citeauthoryear{Flanagan and Freund}{Flanagan and
  Freund}{2010a}]%
        {flanagan2010fasttrack}
\bibfield{author}{\bibinfo{person}{Cormac Flanagan} {and}
  \bibinfo{person}{Stephen~N Freund}.} \bibinfo{year}{2010}\natexlab{a}.
\newblock \showarticletitle{FastTrack: efficient and precise dynamic race
  detection}.
\newblock \bibinfo{journal}{{\it Commun. ACM}} \bibinfo{volume}{53},
  \bibinfo{number}{11} (\bibinfo{year}{2010}), \bibinfo{pages}{93--101}.
\newblock


\bibitem[\protect\citeauthoryear{Flanagan and Freund}{Flanagan and
  Freund}{2010b}]%
        {flanagan2010roadrunner}
\bibfield{author}{\bibinfo{person}{Cormac Flanagan} {and}
  \bibinfo{person}{Stephen~N Freund}.} \bibinfo{year}{2010}\natexlab{b}.
\newblock \showarticletitle{The RoadRunner dynamic analysis framework for
  concurrent programs}. In \bibinfo{booktitle}{{\em Proceedings of the 9th ACM
  SIGPLAN-SIGSOFT workshop on Program analysis for software tools and
  engineering}}. ACM, \bibinfo{pages}{1--8}.
\newblock


\bibitem[\protect\citeauthoryear{Huang, Luo, and Rosu}{Huang
  et~al\mbox{.}}{2015}]%
        {Huang:2015:GGP:2818754.2818856}
\bibfield{author}{\bibinfo{person}{Jeff Huang}, \bibinfo{person}{Qingzhou Luo},
  {and} \bibinfo{person}{Grigore Rosu}.} \bibinfo{year}{2015}\natexlab{}.
\newblock \showarticletitle{GPredict: Generic Predictive Concurrency Analysis}.
  In \bibinfo{booktitle}{{\em Proc.\ of ICSE '15}}. \bibinfo{publisher}{IEEE
  Press}, \bibinfo{pages}{847--857}.
\newblock


\bibitem[\protect\citeauthoryear{Kalhauge and Palsberg}{Kalhauge and
  Palsberg}{2018}]%
        {Kalhauge:2018:SDP:3288538.3276516}
\bibfield{author}{\bibinfo{person}{Christian~Gram Kalhauge} {and}
  \bibinfo{person}{Jens Palsberg}.} \bibinfo{year}{2018}\natexlab{}.
\newblock \showarticletitle{Sound Deadlock Prediction}.
\newblock \bibinfo{journal}{{\em Proc. ACM Program. Lang.\/}}
  \bibinfo{volume}{2}, \bibinfo{number}{OOPSLA}, Article
  \bibinfo{articleno}{146} (\bibinfo{date}{Oct.} \bibinfo{year}{2018}),
  \bibinfo{numpages}{29}~pages.
\newblock
\showISSN{2475-1421}
\showDOI{%
\url{https://doi.org/10.1145/3276516}}


\bibitem[\protect\citeauthoryear{Kini, Mathur, and Viswanathan}{Kini
  et~al\mbox{.}}{2017}]%
        {Kini:2017:DRP:3140587.3062374}
\bibfield{author}{\bibinfo{person}{Dileep Kini}, \bibinfo{person}{Umang
  Mathur}, {and} \bibinfo{person}{Mahesh Viswanathan}.}
  \bibinfo{year}{2017}\natexlab{}.
\newblock \showarticletitle{Dynamic Race Prediction in Linear Time}.
\newblock \bibinfo{journal}{{\em SIGPLAN Not.\/}} \bibinfo{volume}{52},
  \bibinfo{number}{6} (\bibinfo{date}{June} \bibinfo{year}{2017}),
  \bibinfo{pages}{157--170}.
\newblock


\bibitem[\protect\citeauthoryear{Lamport}{Lamport}{1978}]%
        {lamport1978time}
\bibfield{author}{\bibinfo{person}{Leslie Lamport}.}
  \bibinfo{year}{1978}\natexlab{}.
\newblock \showarticletitle{Time, clocks, and the ordering of events in a
  distributed system}.
\newblock \bibinfo{journal}{{\it Commun. ACM}} \bibinfo{volume}{21},
  \bibinfo{number}{7} (\bibinfo{year}{1978}), \bibinfo{pages}{558--565}.
\newblock


\bibitem[\protect\citeauthoryear{Mathur, Kini, and Viswanathan}{Mathur
  et~al\mbox{.}}{2018}]%
        {Mathur:2018:HFR:3288538:3276515}
\bibfield{author}{\bibinfo{person}{Umang Mathur}, \bibinfo{person}{Dileep
  Kini}, {and} \bibinfo{person}{Mahesh Viswanathan}.}
  \bibinfo{year}{2018}\natexlab{}.
\newblock \showarticletitle{What Happens-after the First Race? Enhancing the
  Predictive Power of Happens-before Based Dynamic Race Detection}.
\newblock \bibinfo{journal}{{\em Proc. ACM Program. Lang.\/}}
  \bibinfo{volume}{2}, \bibinfo{number}{OOPSLA}, Article
  \bibinfo{articleno}{145} (\bibinfo{date}{Oct.} \bibinfo{year}{2018}),
  \bibinfo{numpages}{29}~pages.
\newblock
\showISSN{2475-1421}
\showDOI{%
\url{https://doi.org/10.1145/3276515}}


\bibitem[\protect\citeauthoryear{Mattern}{Mattern}{1989}]%
        {Mattern89virtualtime}
\bibfield{author}{\bibinfo{person}{Friedemann Mattern}.}
  \bibinfo{year}{1989}\natexlab{}.
\newblock \showarticletitle{Virtual Time and Global States of Distributed
  Systems}. In \bibinfo{booktitle}{{\em Parallel and Distributed Algorithms}}.
  \bibinfo{publisher}{North-Holland}, \bibinfo{pages}{215--226}.
\newblock


\bibitem[\protect\citeauthoryear{Pozniansky and Schuster}{Pozniansky and
  Schuster}{2003}]%
        {Pozniansky:2003:EOD:966049.781529}
\bibfield{author}{\bibinfo{person}{Eli Pozniansky} {and} \bibinfo{person}{Assaf
  Schuster}.} \bibinfo{year}{2003}\natexlab{}.
\newblock \showarticletitle{Efficient On-the-fly Data Race Detection in
  Multithreaded C++ Programs}.
\newblock \bibinfo{journal}{{\em SIGPLAN Not.\/}} \bibinfo{volume}{38},
  \bibinfo{number}{10} (\bibinfo{date}{June} \bibinfo{year}{2003}),
  \bibinfo{pages}{179--190}.
\newblock
\showISSN{0362-1340}
\showDOI{%
\url{https://doi.org/10.1145/966049.781529}}


\bibitem[\protect\citeauthoryear{Roemer, Gen\c{c}, and Bond}{Roemer
  et~al\mbox{.}}{2018}]%
        {Roemer:2018:HUS:3296979.3192385}
\bibfield{author}{\bibinfo{person}{Jake Roemer}, \bibinfo{person}{Kaan
  Gen\c{c}}, {and} \bibinfo{person}{Michael~D. Bond}.}
  \bibinfo{year}{2018}\natexlab{}.
\newblock \showarticletitle{High-coverage, Unbounded Sound Predictive Race
  Detection}.
\newblock \bibinfo{journal}{{\em SIGPLAN Not.\/}} \bibinfo{volume}{53},
  \bibinfo{number}{4} (\bibinfo{date}{June} \bibinfo{year}{2018}),
  \bibinfo{pages}{374--389}.
\newblock
\showISSN{0362-1340}
\showDOI{%
\url{https://doi.org/10.1145/3296979.3192385}}


\bibitem[\protect\citeauthoryear{Serebryany and Iskhodzhanov}{Serebryany and
  Iskhodzhanov}{2009}]%
        {serebryany2009threadsanitizer}
\bibfield{author}{\bibinfo{person}{Konstantin Serebryany} {and}
  \bibinfo{person}{Timur Iskhodzhanov}.} \bibinfo{year}{2009}\natexlab{}.
\newblock \showarticletitle{ThreadSanitizer: data race detection in practice}.
  In \bibinfo{booktitle}{{\em Proceedings of the workshop on binary
  instrumentation and applications}}. ACM, \bibinfo{pages}{62--71}.
\newblock
\showURL{%
\url{https://github.com/stephenfreund/RoadRunner}}


\bibitem[\protect\citeauthoryear{Smaragdakis, Evans, Sadowski, Yi, and
  Flanagan}{Smaragdakis et~al\mbox{.}}{2012}]%
        {Smaragdakis:2012:SPR:2103621.2103702}
\bibfield{author}{\bibinfo{person}{Yannis Smaragdakis}, \bibinfo{person}{Jacob
  Evans}, \bibinfo{person}{Caitlin Sadowski}, \bibinfo{person}{Jaeheon Yi},
  {and} \bibinfo{person}{Cormac Flanagan}.} \bibinfo{year}{2012}\natexlab{}.
\newblock \showarticletitle{Sound Predictive Race Detection in Polynomial
  Time}.
\newblock \bibinfo{journal}{{\em SIGPLAN Not.\/}} \bibinfo{volume}{47},
  \bibinfo{number}{1} (\bibinfo{date}{Jan.} \bibinfo{year}{2012}),
  \bibinfo{pages}{387--400}.
\newblock
\showISSN{0362-1340}


\bibitem[\protect\citeauthoryear{Smith, Bull, and Obdrizalek}{Smith
  et~al\mbox{.}}{2001}]%
        {smith2001parallel}
\bibfield{author}{\bibinfo{person}{Lorna~A Smith}, \bibinfo{person}{J~Mark
  Bull}, {and} \bibinfo{person}{J Obdrizalek}.}
  \bibinfo{year}{2001}\natexlab{}.
\newblock \showarticletitle{A parallel java grande benchmark suite}. In
  \bibinfo{booktitle}{{\em SC'01: Proceedings of the 2001 ACM/IEEE Conference
  on Supercomputing}}. IEEE, \bibinfo{pages}{6--6}.
\newblock


\end{thebibliography}


\appendix

\section{Proofs}
\label{sec:proofs}

\subsection{Auxiliary Results}

Based on our construction of the happens-before relation derived
from the trace, we can state that a later in the trace appearing
read/write event can never happen before an earlier in the trace appearing
read/write event on the same variable.

\begin{lemma}
\label{le:happens-before-vs-trace-position}  
  Let $T$ be a well-formed trace.
  Let $e,f \in \rwTx$ such that $\pos{e} > \pos{f}$.
  Then, $\neg (\hbP{T}{e}{f})$.
\end{lemma}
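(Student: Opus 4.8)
The plan is to prove a slightly stronger monotonicity statement: whenever $\hbP{T}{a}{b}$ holds for \emph{any} two distinct events $a,b \in T$ (not merely read/write events on $x$), one has $\pos{a} < \pos{b}$. The lemma follows immediately, because $\pos{e} > \pos{f}$ forces $e \neq f$ (accurate trace positions make the position map injective on events) and at the same time rules out $\pos{e} < \pos{f}$. Note that the hypothesis $e,f \in \rwTx$ is not actually needed.

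First I would inspect the three generating clauses of Definition~\ref{def:happens-before} and observe that each of them only ever relates an earlier event to a later one in the trace. For program order the clause explicitly requires $\pos{a} < \pos{b}$. For a write-read dependency $\hbP{T}{\writeEE{x}{j}}{\readEE{x}{k}}$ the side condition is $j < k$, i.e.\ $\pos{\writeEE{x}{j}} < \pos{\readEE{x}{k}}$. For a release-acquire dependency $\hbP{T}{\unlockEE{y}{j}}{\lockEE{y}{k}}$ the side condition is again $j < k$. Hence the one-step generating relation $R$ (the union of the PO-, WRD- and RAD-edges) satisfies $a \mathrel{R} b \Rightarrow \pos{a} < \pos{b}$.

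Next I would lift this property to $\hbP{T}{}{}$ itself. Since every $R$-edge strictly increases the trace position, $R$ is acyclic, so its reflexive-transitive closure $R^{*}$ is antisymmetric and is therefore already a partial order containing $R$. As $\hbP{T}{}{}$ is by definition the \emph{smallest} such partial order, we obtain $\hbP{T}{}{} = R^{*}$. A routine induction on the length of an $R$-chain $a = c_0 \mathrel{R} c_1 \mathrel{R} \dots \mathrel{R} c_m = b$ then shows $a \mathrel{R^{+}} b \Rightarrow \pos{a} < \pos{b}$, since each step increases the position and strict inequalities compose; the reflexive part of $R^{*}$ only contributes pairs $(a,a)$, which are harmless. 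To finish, assume for contradiction $\hbP{T}{e}{f}$, i.e.\ $(e,f) \in R^{*}$. Then either $e = f$, contradicting $\pos{e} > \pos{f}$ (which gives $e \neq f$), or $(e,f) \in R^{+}$, yielding $\pos{e} < \pos{f}$, again a contradiction. Hence $\neg(\hbP{T}{e}{f})$.

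The only mildly delicate step — the one I would spell out most carefully — is the identification of the ``smallest partial order'' with the reflexive-transitive closure $R^{*}$. This identification is legitimate precisely because the strict position-increase of every generating edge guarantees acyclicity, so that $R^{*}$ is genuinely antisymmetric and thus a bona fide partial order; without this observation one could not rule out that the least partial order collapses some pairs. Everything else reduces to a one-line induction on chain length.
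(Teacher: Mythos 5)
Your proof is correct and takes the same route the paper does: the paper's entire argument is ``follows by construction, see Definition~\ref{def:happens-before},'' and you have simply spelled out the details — each generating clause (PO, WRD, RAD) relates an earlier event to a strictly later one, so the generating relation is acyclic, the smallest partial order containing it is its reflexive-transitive closure, and position-monotonicity lifts through chains. Your explicit justification that the ``smallest partial order'' coincides with $R^{*}$ is a worthwhile detail the paper leaves implicit, but it is not a different approach.
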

The statement follows by construction. See Definition~\ref{def:happens-before}.

\begin{lemma}[Criteria for Write-Read Dependency Races]
\label{le:criteria-wrd-race}  
  Let $T$ be a well-formed trace.
  Let $(e,f) \in \allRaces$ where $e$ is a write, $e$ a read on
  some variable $x$
  and $(e,f)$ is a write-read data race
  satisfying the criteria (2a-c) in Definition~\ref{def:data-race}.
  Let $j = \compTID{e}$.
  Then, when processing event $f$ we find that
  $\neg (\accVC{\lastWriteVC{x}}{j} \leq \accVC{\threadVC{i}}{j})$.
\end{lemma}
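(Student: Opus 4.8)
The plan is to reduce the statement to two facts: that $e$ is precisely the most recent write on $x$ before $f$ in the trace, and that when $f$ is processed (before $\threadVC{i}$ is synchronised with $\lastWriteVC{x}$) thread $i$ does not yet know about $e$. Granting these, the claimed inequality is read off directly from the two $j$-components, since $\accVC{\lastWriteVC{x}}{j}$ will equal $e$'s own epoch value while $\accVC{\threadVC{i}}{j}$ will be strictly smaller.

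First I would pin down $e$. From $\hbP{T}{e}{f}$ and Lemma~\ref{le:happens-before-vs-trace-position} we get $\pos{e} < \pos{f}$, so at least one write on $x$ precedes $f$; let $w$ be the one of largest position. The write-read dependency clause (WRD) of Definition~\ref{def:happens-before} then gives $\hbP{T}{w}{f}$. I claim $e = w$. Writing $\hbP{T}{e}{f}$ as a non-empty chain of generating edges (PO, WRD, RAD) ending at $f$, the final edge into the read $f$ is either the program-order edge from $f$'s thread-$i$ predecessor $g$ or the WRD edge from $w$ (a read has no incoming RAD edge). In the first case $e \leq_{HB} g$ with $\hbP{T}{g}{f}$; since $\compTID{e} = j \neq i = \compTID{g}$ we have $e \neq g$, so $\hbP{T}{e}{g}$ and $\hbP{T}{g}{f}$ contradict criterion (2b) of Definition~\ref{def:data-race} with $e' = g$. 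In the second case $e \leq_{HB} w$ with $\hbP{T}{w}{f}$, and if $e \neq w$ then $\hbP{T}{e}{w}$, $\hbP{T}{w}{f}$ again contradict (2b) with $e' = w$. Hence $e = w$, the last write on $x$ before $f$, lying in thread $j \neq i$.

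Next I would compute the two timestamps. Since $e$ is the last write on $x$ before $f$ and $\lastWriteVC{x}$ is reassigned only by the write procedure, at the moment $f$ is processed $\lastWriteVC{x}$ still holds the thread clock recorded at $e$; in particular $\accVC{\lastWriteVC{x}}{j}$ equals $e$'s epoch value $c = \accVC{\threadVC{j}}{j}$ at $e$. For $\threadVC{i}$, I would use that it is modified only by operations of thread $i$, so just before the synchronisation line of the read of $f$ it equals its value after $f$'s program-order predecessor $g$ (or the initial clock, giving $\accVC{\threadVC{i}}{j} = 0 < c$ if $f$ is thread $i$'s first event).

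The crux, and the step where I expect to lean on the standard vector-clock invariant, is to show $\accVC{\threadVC{i}}{j} < c$. By soundness of the clock construction, $\accVC{\threadVC{i}}{j} \geq c$ would force the thread-$j$ event with epoch $c$, namely $e$, to satisfy $e \leq_{HB} g$; together with $\hbP{T}{g}{f}$ and $e \neq g$ this yields $\hbP{T}{e}{g}$, $\hbP{T}{g}{f}$, once more contradicting (2b). Therefore $\accVC{\threadVC{i}}{j} < c = \accVC{\lastWriteVC{x}}{j}$, which is exactly $\neg(\lteVC{\accVC{\lastWriteVC{x}}{j}}{\accVC{\threadVC{i}}{j}})$. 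The main obstacle is making this vector-clock invariant precise for the schedulable happens-before relation with its WRD edges, so that the equivalence ``$\accVC{\threadVC{i}}{j} \geq c$ iff $e \leq_{HB} g$'' can be invoked cleanly; the remaining work is routine bookkeeping about which events update which clocks.
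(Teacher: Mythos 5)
Your proof is correct and follows essentially the same route as the paper's: assume $\accVC{\lastWriteVC{x}}{j} \leq \accVC{\threadVC{i}}{j}$, observe that this forces a synchronization (happens-before path) from $e$'s thread into $f$'s thread, and derive a contradiction with conditions (2b) and (2c) of Definition~\ref{def:data-race}. The paper's own proof is only a four-line sketch of this argument; your version additionally makes explicit two steps the paper leaves implicit, namely that (2b)/(2c) force $e$ to be the last write on $x$ before $f$ (so $\lastWriteVC{x}$ still holds $e$'s clock), and which intermediate event $e'$ witnesses the violation of (2b).
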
  
\begin{proof}
  Event $e$ is processed before $f$.
  Consider the vector clock $\lastWriteVC{x}$ of $e$
  and $\threadVC{i}$ of $f$ (before synchronization with $\lastWriteVC{x}$).
  Suppose $\accVC{\lastWriteVC{x}}{j} \leq \accVC{\threadVC{i}}{j}$.
  This implies that there must have been some form of synchronization
  via $e$'s and $f$'s thread.
  This contradicts the assumptions
  (2b) and (2c) from Definition~\ref{def:data-race}.
  Hence, we have
  that $\neg (\accVC{\lastWriteVC{x}}{j} \leq \accVC{\threadVC{i}}{j})$.  
\end{proof}

\subsection{Proof of Proposition~\ref{prop:shb-completeness-wrd-race}}

\begin{proof}
  Suppose $e$ is the write and $f$ the read event
  where the variable involved is named~$x$.
  Event $e$ is processed before $f$.
  Suppose $j = \compTID{e}$.
  The write history $\writeVC{x}$ is updated by adjusting the time stamp
  at position $j$.
  Based on Lemma~\ref{le:criteria-wrd-race},
  we find that $\neg \lteVC{\writeVC{x}}{\threadVC{i}}$.
  Hence, the \SHB\ algorithm reports a race when processing $f$.
\end{proof}  

\subsection{Lemma~\ref{le:time-stamp-vc-concurrent}}

\begin{lemma}[Concurrent Epoch-VC Criteria]
\label{le:time-stamp-vc-concurrent}  
Let $T$ be a well-formed trace.
Let $e,f \in \rwTx$ for some variable $x$.
Let $\thread{j}{k}$ be the epoch of $e$
and $\vcN$ the vector clock of $f$ as calculated
by running Algorithm~\ref{alg:vc-edges-conc-read-write} on $T$.
If (1) $\pos{e} < \pos{f}$ and (2) $k > \accVC{\vcN}{j}$,
then $e$ and $f$ are concurrent to each other.
\end{lemma}
\begin{proof}
  We assume the contrary.
  Suppose $\hb{f}{e}$. This immediately leads to
  a contradiction as we assume (1) $\pos{e} < \pos{f}$.
  See Lemma~\ref{le:happens-before-vs-trace-position}.

  Suppose $\hb{e}{f}$.
  Then, $e$'s thread must have been synced with $f$'s thread.
  Hence, the time stamp at position $j$ for $f$'s thread
  must be greater or equal than $k$. This is in contradiction
  to the assumption (2) $k > \accVC{\vcN}{j}$.
\end{proof}

\subsection{Proof of Proposition~\ref{prop:time-stamp-vc-concurrent}}

\begin{proof}
  From Lemma~\ref{le:time-stamp-vc-concurrent} we follow 
  that events in $\rwVC{x}$ are concurrent to each other.
  Based on the same Lemma we can argue that
  the set $\concEvt{x}$ accumulates pairs of concurrent events.
\end{proof}

\subsection{Proof of Lemma~\ref{le:no-other-conc-in-between}}

\begin{proof}
  By induction on $T$. Consider the point where $e$ is added to $\rwVC{x}$.
  We assume that $e$'s epoch is of the form $\thread{j}{k}$.
  We show that $e$ is still in $\rwVC{x}$
  at the point in time we process $f$.

  Assume the contrary. So, $e$ has been removed from $\rwVC{x}$.
  This implies that there is some $g$ such that
  $\hb{e}{g}$ and $\pos{f} > \pos{g} > \pos{e}$.
  We show that $g$ must be concurrent to $f$.

  Assume the contrary.  Suppose $\hb{g}{f}$. But then $\hb{e}{f}$
  which contradicts the assumption that $e$ and $f$ are concurrent
  to each other.
  Suppose $\hb{f}{g}$. This contradicts the fact that $\pos{f} > \pos{g}$.

  We conclude that $g$ must be concurrent to $f$.
  This is a contradiction to (3).
  Hence, $e$ has not been removed from $\rwVC{x}$.

  By assumption $e$ and $f$ are concurrent to each other.
  Then, we can argue that $k > \accVC{\threadVC{i}}{j}$
  where by assumption $\threadVC{i}$ is $f$'s vector clock
  and $e$ has the epoch $\thread{j}{k}$.
  Hence, $(e,f)$ is added to $\concEvt{x}$.
\end{proof}

\subsection{Lemma~\ref{le:time-stamp-vc-hb}}

\begin{lemma}[Happens-Before Epoch-VC Criteria]
\label{le:time-stamp-vc-hb}  
Let $T$ be a well-formed trace.
Let $e,f \in \rwTx$ for some variable $x$.
Let $\thread{j}{k}$ be the epoch of $e$
and $\vcN$ be the vector clock of $f$ as calculated
by running Algorithm~\ref{alg:vc-edges-conc-read-write} on $T$.
If (1) $\pos{e} < \pos{f}$ and (2) $k < \accVC{\vcN}{j}$,
then $\hb{e}{f}$.
\end{lemma}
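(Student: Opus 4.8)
The plan is to argue by contradiction, mirroring the proof of the companion Lemma~\ref{le:time-stamp-vc-concurrent} but exploiting the \emph{completeness} rather than the soundness direction of the vector-clock construction. So I would assume $\neg(\hb{e}{f})$ and derive a contradiction with assumption~(2).

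First I would rule out that $f$ happens before $e$. By assumption~(1) we have $\pos{f} > \pos{e}$, and since $e,f \in \rwTx$, Lemma~\ref{le:happens-before-vs-trace-position} applied to the pair $f,e$ yields $\neg(\hb{f}{e})$. Together with the assumed $\neg(\hb{e}{f})$, this means $e$ and $f$ are concurrent, i.e.~$\notHB{e}{f}$. The remaining task is to show that concurrency forces $\accVC{\vcN}{j} < k$, which contradicts assumption~(2) that $k < \accVC{\vcN}{j}$.

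To establish $\accVC{\vcN}{j} < k$, I would rely on the standard vector-clock invariant maintained by Algorithm~\ref{alg:vc-edges-conc-read-write}: for every event the $j$-th component of its vector clock equals the time stamp of the most recent thread-$j$ event that happens before it (or equals it, when the event lies in thread $j$). The events of thread $j$ are totally ordered by program order, and $e$, whose epoch is $\thread{j}{k}$, carries time stamp $k$. If some thread-$j$ event $g$ with time stamp $\geq k$ were to happen before $f$, then $e$ would happen before $g$ (or coincide with $g$) by program order, and hence $\hb{e}{f}$ by transitivity --- contradicting concurrency. Therefore the latest thread-$j$ event happening before $f$ must have time stamp strictly below $k$ (and $\accVC{\vcN}{j}=0<k$ if there is none, since $k \geq 1$). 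In either case $\accVC{\vcN}{j} < k$, the desired contradiction.

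The main obstacle is the vector-clock invariant itself, namely that $\accVC{\vcN}{j}$ \emph{exactly} records the time stamp of the latest thread-$j$ event ordered before $f$. This is the completeness counterpart of the synchronization argument already invoked in the proof of Lemma~\ref{le:time-stamp-vc-concurrent} (where $\hb{e}{f}$ was shown to force $\accVC{\vcN}{j} \geq k$). I would discharge it by a routine induction over the processing of the trace, checking that each join $\supVC{\threadVC{i}}{\cdot}$ and each increment $\incC{\threadVC{i}}{i}$ performed in the \textsc{acquire}, \textsc{release}, \textsc{read} and \textsc{write} cases preserves the correspondence between clock components and the per-thread counts of happens-before predecessors; no individual step is delicate, but the bookkeeping across the write-read and release-acquire synchronizations is where care is required.
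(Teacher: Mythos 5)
Your proposal is correct and follows essentially the same route as the paper's proof: a case split on $\neg(\hb{e}{f})$, with $\hb{f}{e}$ ruled out by Lemma~\ref{le:happens-before-vs-trace-position} and concurrency ruled out because $k < \accVC{\vcN}{j}$ can only arise if thread $j$ synchronized into $f$'s thread at or after $e$, forcing $\hb{e}{f}$. You merely make explicit the vector-clock invariant that the paper leaves implicit in the phrase ``must have been synced,'' which is a reasonable (and slightly more careful) rendering of the same argument.
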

\begin{proof}
  We assume the contrary.
  Suppose $\hb{f}{e}$. This immediately leads to
  a contradiction as we assume (1) $\pos{e} < \pos{f}$.
  See Lemma~\ref{le:happens-before-vs-trace-position}.

  Suppose $f$ and $e$ are concurrent to each other.
  This is also impossible because due to assumption (2) $k < \accVC{\vcN}{j}$,
  the thread event $e$ is in must have been synced with $f$'s thread.
\end{proof}

\subsection{Proof of Proposition~\ref{le:edge-sound}}

\begin{proof}
  Result follows from Lemma~\ref{le:time-stamp-vc-hb}.
  For each primitive edge $e \gtEdge f \in \edges{x}$
  we that $\pos{e} < \pos{f}$ and $k < \accVC{\vcN}{j}$
  where $\thread{j}{k}$ is $e$'s epoch and
  $\vcN$ is $f$'s vector clock.
\end{proof}

\subsection{Proof of Lemma~\ref{le:reach-all-pairs}}

\begin{proof}
  We consider the point in time event $e$ is added to $\rwVC{x}$
  when running Algorithm~\ref{alg:vc-edges-conc-read-write}.
  By the time we reach $f$, event $e$ has been removed from $\rwVC{x}$.
  Otherwise, $(e, f) \in \concEvt{x}$ which
  contradicts the assumption.

  Hence, there must be some $g_1$ in $\rwVC{x}$
  where $\pos{e} < \pos{g_1} < \pos{f}$.
  As $g_1$ has removed $e$, there must exist $e \gtEdge g_1 \in \edges{x}$.
  
  By the time we reach $f$, either $g_1$ is still in $\rwVC{x}$,
  or $g_1$ has been removed by some $g_2$
  where $g_1 \gtEdge g_2 \in \edges{x}$ and $g_2 \in \rwVC{x}$.
  As between $e$ and $f$ there can only be a finite number of events,
  we must reach some $g_n \in \rwVC{x}$
  where $g_1 \gtEdge \dots \gtEdge g_n$.
  Event $g_n$ must be concurrent to $f$.
  Otherwise, by soundness of edge constraints we conclude $\hb{e}{f}$
  which contradicts the assumption.

  Hence, $g_n$ is concurrent to $f$.
  Hence, $(g_n, f) \in \concEvt{x}$.
  Furthermore, we have that $e \gtEdge g_1 \gtEdge \dots \gtEdge g_n \in \edges{x}$.
\end{proof}

\subsection{Proof of Theorem~\ref{th:all-concs-post}}

\begin{proof}
  We first show that the construction of $\accConcs{x}$ terminates by
  showing that no pair is added twice.
  Consider $(e,f) \in \concEvt{x}$ where $g \gtEdge e$.
  We remove $(e,f)$ and add $(g,f)$.

  Do we ever encounter $(f,e)$? This is impossible as the position of first component
  is always smaller than the position of the second component.

  Do we re-encounter $(e,f)$? This implies that there must exist $g$
  such that $e \gtEdge g$ where $(g,f) \in \concEvt{x}$ (as computed
  by Algorithm~\ref{alg:vc-edges-conc-read-write}).
  By Lemma~\ref{le:no-other-conc-in-between} this is in contradiction
  to the assumption that $(e,f)$ appeared in $\concEvt{x}$.
  We conclude that the construction of $\accConcs{x}$ terminates.
  
  Pairs are kept in a total order imposed by the position of the first component.
  As shown above we never revisit pairs.
  For each $e$ any predecessor $g$ where $g \gtEdge e \in \edges{x}$
  can be found in constant time (by using a graph-based data structure).
  Then, a new pair is built in constant time.
  
  There are $O(n*n)$ pairs overall to consider.
  We conclude that the construction of $\accConcs{x}$ takes time $O(n*n)$.
  By Lemma~\ref{le:reach-all-pairs} we can guarantee that
  all pairs in $\allConcs{x}$ will be reached.
  Then, $\allConcs{x} \subseteq \accConcs{x}$.
\end{proof}

\subsection{Proof of Proposition~\ref{prop:elim-non-conc-pairs}}

\begin{proof}
  By construction $\pos{\thread{i}{k}} < \pos{\thread{j}{l}}$.
  Let $\e = \thread{i}{k}$ and $\f= \thread{j}{l}$.
  By Lemma~\ref{le:time-stamp-vc-hb} we have that $\e$ happens before $\f$.
  So, the pair $(\e, \f)$ is removed.
  
  If $\neg (k < \accVC{V}{j})$ we can argue that the time stamps cannot be equal.
  Hence, Lemma~\ref{le:time-stamp-vc-concurrent} applies
  and all remaining pairs in $\accConcs{x}$ must be concurrent to each other.
  Hence, $\accConcs{x} = \allConcs{x}$.
\end{proof}  

\section{Fork and Join}
\label{sec:fork-join}

We add fork and join to our language.

\begin{definition}[Fork and Join Events]
\bda{lcll}
e & ::= &  \ldots \mid \forkEE{i}{k} \mid \joinEE{i}{k} 
\eda
\end{definition}

We write $\forkEE{i}{k}$ to denote a fork event at position $k$
where the thread the event is in forks a new thread with thread id~$i$.
We write $\joinEE{i}{k}$ to denote a fork event at position $k$
where the thread the event is in waits for all events in the thread
with the tread id~$i$ to complete.

Similar to Definition~\ref{def:proper-acq-rel-order},
we require that fork and join events are properly ordered.
All events to be forked occur before the fork event
and the join event occurs after all events.

\begin{definition}[Proper Fork/Join Order]
  We say a trace $T$ enjoys a \emph{proper fork/join order}
  iff the following conditions are satisfied:

  \begin{itemize}
  \item For each $\thread{j}{\forkEE{i}{k}} \in T$
    we have that $\neg \exists e \in T$ such that
     $\compTID{e} = i$ and $\pos{e} < k$.
  \item For each $\thread{j}{\joinEE{i}{k}} \in T$
     we have that $\neg \exists e \in T$ such that
     $\compTID{e} = i$ and $\pos{e} > k$.
\end{itemize}    
  
\end{definition}  

We say a trace $T$ is \emph{well-formed} iff
trace positions in $T$ are accurate and $T$
enjoys a proper acquire/release order as well as a proper fork/join order.

We extend Definition~\ref{def:happens-before} as follows.

\begin{definition}[Fork/Join Happens-Before]
Let $T$ be a well-formed trace.
    \begin{description}
    \item[Fork order (FO):]
      Let $\thread{j}{\forkEE{i}{k}} \in T$.      
      Let $e \in T$ where $\compTID{f}=i$.
       Then, $\hbP{T}{\forkEE{i}{k}}{e}$.
  \item[Join order (JO):]
      Let $\thread{j}{\joinEE{i}{k}} \in T$.      
      Let $e \in T$ where $\compTID{e} = i$.
      Then, $\hbP{T}{e}{\joinEE{i}{k}}$.
  \end{description}
\end{definition}

\begin{algorithm}
  \caption{Fork and Join}\label{alg:fork-join}

\begin{algorithmic}[1]
\Procedure{join}{$i,j$}
\State $\threadVC{i} = \supVC{\threadVC{i}}{\threadVC{j}}$
\EndProcedure
\end{algorithmic}
  
\begin{algorithmic}[1]
\Procedure{fork}{$i,j$}
\State $\threadVC{j} = \threadVC{i}$
\State $\updateVC{\threadVC{j}}{j}{1}$
\State $\incC{\threadVC{i}}{i}$
\EndProcedure
\end{algorithmic}
 
\end{algorithm}

The necessary adjustments to construct the happens-before relation
based on vector clocks are shown in Algorithm~\ref{alg:fork-join}.
In case of a join, we synchronize the current thread's vector clock
with the vector clock of the to be joined events.
In case of a fork, we initialize the time stamp of the to be forked thread.

All results stated carry over as the treatment of fork/join is very similar to the treatment of acquire/release.

\section{Incomplete Edge Constraints and Elimination Step}
\label{sec:edges-overapproximate}

\begin{figure*}[tp]
\bda{lll|ll|ll|llllll}
  & \thread{1}{} & [1,0,0] & \thread{2}{} & [0,1,0] & \thread{3}{} & [0,0,1] & \lockVC{x} & \rwVC{x} & \concEvt{x} & \edges{x}
\\ \hline
1. & \writeE{x} & [1,0,0]  &&  && && \{ \thread{1}{1} \} &&
\\
2. & \lockE{y} &  [2,0,0]         &&  && &&&
\\
3. & \unlockE{y} & [2,0,0] &&  && & [2,0,0] && 
\\
4. &                       && \lockE{y} & [2,1,0] && &&&
\\
5. &                       && \unlockE{y} & [2,1,0]      && & [2,1,0] &&
\\
6. &                       && \writeE{x} & [2,2,0] && && \{ \thread{2}{2} \} & & \{ \thread{1}{1} \gtEdge \thread{2}{2} \}
\\
7. &                       &&            && \lockE{y} & [2,1,1]
\\
8. &                       &&            && \unlockE{y} & [2,1,1] & [2,1,1]
\\ 
9. &                       &&            && \writeE{x} & [2,1,2]  & & \{ \thread{2}{2}, \thread{3}{2} \} & \{ (\thread{2}{2}, \thread{3}{2}) \}
\eda
  \caption{Edge Constraints are Incomplete}
    \label{fig:epochs-edges}
\end{figure*}

Edge constraints are sound but not complete w.r.t.~the happens-before relation.
Consider two read/write events $e$ and $f$
on the same shared variable.
If event $f$ happens-before event $e$, there might not be an edge relation
$f \gtEdge^* e$.

Consider the example in Figure~\ref{fig:epochs-edges}.
The event $\writeEE{x}{1}$ belonging to $\thread{1}{1}$
happens before the event $\writeEE{x}{9}$ belonging to $\thread{3}{2}$
but there is no corresponding edge constraint.

Another observation is that the candidate pairs
obtained via the post-processing step
described by Definition~\ref{def:all-concs-post}
may not necessarily represent concurrent pairs.

For example, we find that
$\thread{1}{1} \gtEdge^* \thread{2}{2}$
and $\thread{1}{1} \gtEdge^* \thread{3}{2}$.
However, the pair $(\thread{1}{1}, \thread{3}{2})$ obtained via post-processing
does not form a conflicting (concurrent) pair of events.
The event $\writeEE{x}{1}$ belonging to $\thread{1}{1}$
happens before the event $\writeEE{x}{9}$ belonging to $\thread{3}{2}$.

To eliminate pairs such  as $(\thread{1}{1}, \thread{3}{2})$,
we use the following reasoning
as described by Proposition~\ref{prop:elim-non-conc-pairs} .
From $\writeEE{x}{9}$'s vector clock $[2,1,2]$ we extract the time stamp of thread $1$.
We find that this time stamp is greater than the time stamp
of epoch $\thread{1}{1}$.
Hence, we conclude that $\writeEE{x}{1}$ happens before
$\writeEE{x}{9}$. So, there is no race.

\section{Tracing}
For benchmarking we use tests from the Java Grande Forum (\cite{smith2001parallel}) and the DaCapo (version 9.12, \cite{Blackburn:2006:DBJ:1167473.1167488}) benchmark suite.
Many tests produce more than 100 million events in the given test case.
For data race prediction, only events on variables that are shared between threads are interesting.

For example, for the tomcat benchmark we encounter 26 million events from which only 11 million involve shared variables.
Another example is the xalan benchmark with over 62 million events and only 7 million on shared variables.
To reduce the size of the trace and make benchmarking feasible, we ignore events on unshared variables.
Without this filter, the memory consumption which would be far above 64 GB.

To detect unshared variables during the recording of the program trace, we perform the following tracing method.
The last thread and its access event are stored for each variable. If the same thread accesses the variable again,
the tracer only stores the current event. As soon as a second thread accesses the variable (last thread $\neq$ current thread),
the stored last event and the current event are written to the trace.
After encountering the first access by another thread all accesses to the variable are written to the trace, independent of the accessing thread.

This filtering method can introduce false positives due to wrongly ordered write-read dependencies and missed data races because of the ignored events.
The modified RoadRunner implementation can be found at \url{https://github.com/KaiSta/roadrunnerforshbee}.
Similar filters are used in~\cite{flanagan2010fasttrack}, \cite{Roemer:2018:HUS:3296979.3192385} and \cite{Mathur:2018:HFR:3288538:3276515},
where consecutive events by the same thread on a variable are ignored.
Like in our case, false positives due to `incomplete' traces may arise.

\section{\SHB\ Adaptation to predict WRD races}

We observe that write-read races due to write-read dependencies
can be directly obtained via an adaptation of the \SHB\ algorithm.

\begin{algorithm}
  \caption{Predicting WRD Races}\label{alg:wrd-race-checker}

\begin{algorithmic}[1]
\Procedure{write}{$i,x, k$}
\State $\raceVC{\writeVC{x}}{\threadVC{i}}$
\State $\raceVC{\readVC{x}}{\threadVC{i}}$  
\State $\lastWriteVC{x} = \threadVC{i}$
\State $\lastWriteID{x} = i$
\State $\updateVC{\writeVC{x}}{i}{\accVC{\threadVC{i}}{i}}$
\State $\incC{\threadVC{i}}{i}$
\EndProcedure
\end{algorithmic}

\begin{algorithmic}[1]
\Procedure{read}{$i,x, k$}
\State $\raceVC{\writeVC{x}}{\threadVC{i}}$
\State $\raceWRD{\accVC{\lastWriteVC{x}}{\lastWriteID{x}}}{\accVC{\threadVC{i}}{\lastWriteID{x}}}$
\State $\threadVC{i} = \supVC{\threadVC{i}}{\lastWriteVC{x}}$
\State $\updateVC{\readVC{x}}{i}{\accVC{\threadVC{i}}{i}}$
\State $\incC{\threadVC{i}}{i}$
\EndProcedure
\end{algorithmic}

\end{algorithm}

To distinguish between the different kinds of write-read races,
we adapt the \SHB\ algorithm as follows.
We additionally keep track of the thread id of the last write
via $\lastWriteID{x}$.
We write $\raceWRD{i}{j}$ as a short-hand for
``if $\neg (i \leq j)$ then write-read dependency race detected''.
The updates only affect the processing of read and write events.
See Algorithm~\ref{alg:wrd-race-checker}.
Thus, we can detect all write-read races due to write-read dependencies.

\begin{proposition}[SHB Completeness for Write-Read Dependency Races]
\label{prop:shb-completeness-wrd-race}  
  Let $T$ be a well-formed trace.
  Let $(e,f) \in \allRaces$ where $(e,f)$ is a write-read data race
  satisfying the criteria (2) in Definition~\ref{def:data-race}.
  Then, the \SHB\ algorithm reports that the read event of the pair $(e,f)$
  is in a (write) race.
\end{proposition}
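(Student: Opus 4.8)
The plan is to show that the very first race check performed by the \SHB\ \textsc{read} procedure, namely the call $\raceVC{\writeVC{x}}{\threadVC{i}}$, fires at the moment $f$ is processed. Write $x$ for the shared variable, set $j = \compTID{e}$ for the thread of the write and $i = \compTID{f}$ for the thread of the read; by criterion (2c) we have $j \neq i$. Since $\hbP{T}{e}{f}$ holds, Lemma~\ref{le:happens-before-vs-trace-position} gives $\pos{e} < \pos{f}$, so $e$ is processed strictly before $f$. The goal is to exhibit a single coordinate at which $\writeVC{x}$ strictly exceeds $\threadVC{i}$, since $\neg\lteVC{\writeVC{x}}{\threadVC{i}}$ only requires one such coordinate, and I would use coordinate $j$.

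First I would import the inequality already available for $\lastWriteVC{x}$. Lemma~\ref{le:criteria-wrd-race} states that, when processing $f$, we have $\neg(\accVC{\lastWriteVC{x}}{j} \leq \accVC{\threadVC{i}}{j})$, i.e.\ $\accVC{\lastWriteVC{x}}{j} > \accVC{\threadVC{i}}{j}$. So what remains is to transfer this fact from $\lastWriteVC{x}$ to $\writeVC{x}$ at coordinate $j$.

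The bridge is the observation that $e$ is the last write on $x$ preceding $f$. The incoming happens-before edges to a read are program order and the write-read dependency from the nearest preceding write on $x$; if that nearest write were some $w \neq e$, one checks from criterion (2b) that either $w$ or $f$'s program-order predecessor would be an intermediate event strictly between $e$ and $f$, a contradiction. Consequently no write on $x$ occurs between $e$ and $f$, so between the processing of $e$ and of $f$ neither $\accVC{\writeVC{x}}{j}$ nor $\accVC{\lastWriteVC{x}}{j}$ is modified. When $e$ was processed, the \textsc{write} procedure set both $\accVC{\writeVC{x}}{j}$ and $\accVC{\lastWriteVC{x}}{j}$ to $e$'s own pre-increment time stamp $\accVC{\threadVC{j}}{j}$; hence $\accVC{\writeVC{x}}{j} = \accVC{\lastWriteVC{x}}{j}$ at the instant $f$ is reached.

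Combining these, $\accVC{\writeVC{x}}{j} = \accVC{\lastWriteVC{x}}{j} > \accVC{\threadVC{i}}{j}$, so $\neg\lteVC{\writeVC{x}}{\threadVC{i}}$ and the call $\raceVC{\writeVC{x}}{\threadVC{i}}$ at the head of \textsc{read} reports a race on the read event $f$, which is the claim. I expect the bridging step to be the main obstacle: one must argue carefully that $e$ is genuinely the last write on $x$ before $f$ (so that $\writeVC{x}$ and $\lastWriteVC{x}$ agree at coordinate $j$) and, just as importantly, that the race check inspects $\writeVC{x}$ \emph{before} $\threadVC{i}$ is synchronized with $\lastWriteVC{x}$ later in the procedure; after that synchronization $\accVC{\threadVC{i}}{j}$ would be raised to $\accVC{\lastWriteVC{x}}{j}$ and the strict inequality would collapse.
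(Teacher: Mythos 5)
Your proof is correct and follows essentially the same route as the paper's: it invokes Lemma~\ref{le:criteria-wrd-race} to get the strict inequality at coordinate $j$ and then transfers it to $\writeVC{x}$ to make $\raceVC{\writeVC{x}}{\threadVC{i}}$ fire. The only difference is that you explicitly justify the bridging step (that $e$ is the nearest preceding write on $x$, so $\accVC{\writeVC{x}}{j}=\accVC{\lastWriteVC{x}}{j}$ when $f$ is reached, and that the check happens before the synchronization with $\lastWriteVC{x}$), which the paper's proof leaves implicit.
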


\section{\SHBFORALL\ - \SHB\ Adaptation to predict all race pairs}

Adaptation of \SHB\ algorithm to predict all data races pairs (for a trace-specific schedule).
Some post-processing is necessary where we assume that for each processed event
we have its vector clock.

\begin{algorithm}
  \caption{\SHB\ algorithm adapted}\label{alg:vc-race-checker-adapt}

\begin{algorithmic}[1]
\Procedure{acquire}{$i,x$}
\State $\threadVC{i} = \supVC{\threadVC{i}}{\lockVC{x}}$
\EndProcedure
\end{algorithmic}

\begin{algorithmic}[1]
\Procedure{write}{$i,x, k$}
\State $\vcEvt = \{ (k,\threadVC{i}) \} \cup \vcEvt$
\State $\raceVC{\writeVC{x}}{\threadVC{i}}$
\State $\raceVC{\readVC{x}}{\threadVC{i}}$  
\State $\lastWriteVC{x} = \threadVC{i}$
\State $\updateVC{\writeVC{x}}{i}{\accVC{\threadVC{i}}{i}}$
\State $\incC{\threadVC{i}}{i}$
\EndProcedure
\end{algorithmic}

\begin{algorithmic}[1]
\Procedure{release}{$i,x$}
\State $\lockVC{x} = \threadVC{i}$  
\State $\incC{\threadVC{i}}{i}$
\EndProcedure
\end{algorithmic}

\begin{algorithmic}[1]
\Procedure{read}{$i,x, k$}
\State $\raceVC{\writeVC{x}}{\threadVC{i}}$    
\State $\threadVC{i} = \supVC{\threadVC{i}}{\lastWriteVC{x}}$
\State $\vcEvt = \{ (k,\threadVC{i}) \} \cup \vcEvt$  
\State $\updateVC{\readVC{x}}{i}{\accVC{\threadVC{i}}{i}}$
\State $\incC{\threadVC{i}}{i}$
\EndProcedure
\end{algorithmic}

\end{algorithm}

The Algorithm~\ref{alg:vc-race-checker-adapt}
additionally records for each event its vector clock.
For this purpose, we use the set $\vcEvt$.
This component does not appear in the original formulation of \SHB.
However, this extra component is necessary to predict
the set $\allRaces$ of all data race pairs 
under the schedulable happen-before relation
as defined by Definition~\ref{def:data-race}.

The set $\vcEvt$ is initially empty.
We use the trace position to uniquely identify each event
and thus record its associated vector clock
as pairs in $\vcEvt$.
The set $\vcEvt$ is updated for each write and read event.
All other parts remain the same as in Algorithm~\ref{alg:vc-race-checker}.

To predict all remaining races, we require some post-processing.
For each potential conflicting pair of events, read-write and write-write,
we need to check if the two events are concurrent to each other.
The set $\vcEvt$ records for each event its vector clock.
So, we need to consider all possible combinations of potentially
conflicting pairs and compare their vector clocks.

The following result follows from Theorem 4.2 stated
in\cite{Mathur:2018:HFR:3288538:3276515}.
\begin{theorem}[Soundness of \SHB\ Algorithm\cite{Mathur:2018:HFR:3288538:3276515}]
  Let $T$ be a well-formed trace.
  For each $e \in T$ where the \SHB\ algorithm reports a race,
  there exists $f \in T$ such that $(e,f) \in \allRaces$.
\end{theorem}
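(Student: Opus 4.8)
The plan is to argue by case analysis on the event $e$ at which Algorithm~\ref{alg:vc-race-checker} reports a race, together with the particular $\raceVC{}{}$ call that triggers the report. A race is signalled exactly when some call $\raceVC{H}{\threadVC{i}}$ fails, i.e.\ when $\neg \lteVC{H}{\threadVC{i}}$ for $H$ one of the histories $\writeVC{x}$ or $\readVC{x}$. Inspecting the procedures, only three situations arise: $e$ a write with $\neg\lteVC{\writeVC{x}}{\threadVC{i}}$, $e$ a write with $\neg\lteVC{\readVC{x}}{\threadVC{i}}$, and $e$ a read with $\neg\lteVC{\writeVC{x}}{\threadVC{i}}$. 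In each I must exhibit a partner $f$ and show that $e$ and $f$ are in a data race per Definition~\ref{def:data-race}.

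The first step is to establish the bookkeeping invariant that locates the witness. Since $\writeVC{x}$ is updated only by the assignment $\updateVC{\writeVC{x}}{i}{\accVC{\threadVC{i}}{i}}$, position $j$ of $\writeVC{x}$ always holds the time stamp of thread $j$'s most recent write on $x$; the corresponding event $f$ carries the epoch $\thread{j}{\accVC{\writeVC{x}}{j}}$. The analogous statement holds for $\readVC{x}$ and reads. Hence a failing check $\neg \lteVC{H}{\threadVC{i}}$ yields a position $j$ with $\accVC{H}{j} > \accVC{\threadVC{i}}{j}$, and I read off the witness $f$ in thread $j$ whose epoch time stamp is $k = \accVC{H}{j}$. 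Because $f$ was processed strictly earlier than $e$, we have $\pos{f} < \pos{e}$.

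Next I would show $e$ and $f$ are concurrent. The thread-clock updates for acquire, release, write and read in Algorithm~\ref{alg:vc-race-checker} coincide with those in Algorithm~\ref{alg:vc-edges-conc-read-write}, so the Concurrent Epoch-VC Criteria (Lemma~\ref{le:time-stamp-vc-concurrent}) applies verbatim: from $\pos{f} < \pos{e}$ and $k > \accVC{\threadVC{i}}{j}$ we obtain $\notHB{f}{e}$. Matching event kinds then finishes each case. A failing $\writeVC{x}$ check against a write $e$ gives a write $f$, so $\{e,f\}$ is a write-write race; a failing $\readVC{x}$ check against a write $e$ gives a read $f$ and a failing $\writeVC{x}$ check against a read $e$ gives a write $f$, each yielding a concurrent write-read pair that satisfies clause~(1) of the write-read case. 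Ordering the two events as (write, read) or in either order for writes, the resulting pair lies in $\allRaces$, so $e$ participates in a race.

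I expect the witness-extraction invariant to be the main obstacle rather than the concurrency argument: one must verify carefully that $\accVC{\writeVC{x}}{j}$ (resp.\ $\accVC{\readVC{x}}{j}$) is never disturbed by activity in threads other than $j$, so that it faithfully records a single, locatable event with the claimed epoch. A secondary point is bridging the two algorithms, namely spelling out that $\writeVC{x}$, $\readVC{x}$ and $\rwVC{x}$ differ only in bookkeeping while leaving every $\threadVC{i}$ identical, which is what licenses transferring the \SHBEE\ lemmas to \SHB. Alternatively, since the thread clocks agree with those of the original construction, the entire statement can be discharged by appeal to Theorem~4.2 of~\cite{Mathur:2018:HFR:3288538:3276515}, the route the surrounding text suggests.
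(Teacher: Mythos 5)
The paper gives no direct proof of this theorem: it discharges it entirely by citing Theorem~4.2 of \cite{Mathur:2018:HFR:3288538:3276515}, which is exactly the fallback you name in your final sentence. Your primary route --- a direct case analysis --- is fine for the two write cases, but it has a genuine gap in the read case. In \textsc{read} of Algorithm~\ref{alg:vc-race-checker}, the check $\raceVC{\writeVC{x}}{\threadVC{i}}$ is performed \emph{before} the synchronization $\threadVC{i} = \supVC{\threadVC{i}}{\lastWriteVC{x}}$, whereas Lemma~\ref{le:time-stamp-vc-concurrent} requires the vector clock of the later event \emph{as calculated by the algorithm}, which for a read is the post-synchronization clock. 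So from a failing position $j$ with $\accVC{\writeVC{x}}{j} = k > \accVC{\threadVC{i}}{j}$ you cannot conclude that the write $f$ with epoch $\thread{j}{k}$ is concurrent to the read $e$: after the join with $\lastWriteVC{x}$ the read's clock at position $j$ may reach or exceed $k$, and $\hbP{T}{f}{e}$ may in fact hold.

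Two sub-cases then arise, and only one of them is benign. If $f$ is the last write on $x$, the pair $(f,e)$ is still a race, but via clause~(2) of Definition~\ref{def:data-race} (a write-read dependency race), not clause~(1); establishing (2b) and (2c) needs a separate argument in the style of Lemma~\ref{le:criteria-wrd-race}. Worse, if $f$ is not the last write, $(f,e)$ need not be in $\allRaces$ at all. Consider $[\thread{1}{\writeEE{x}{1}}, \thread{2}{\readEE{x}{2}}, \thread{2}{\writeEE{x}{3}}, \thread{3}{\readEE{x}{4}}]$: when processing $\readEE{x}{4}$ the check against $\writeVC{x} = [1,2,0]$ fails at positions $1$ and $2$, yet $\hbP{T}{\writeEE{x}{1}}{\writeEE{x}{3}}$ and $\hbP{T}{\writeEE{x}{3}}{\readEE{x}{4}}$, so $(\writeEE{x}{1},\readEE{x}{4})$ violates condition (2b) and is not a race; only position $2$ yields a valid partner. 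Hence ``read off the witness from \emph{a} failing position'' is not sound --- you must argue that \emph{some} failing position (e.g.\ that of the last write, or of a write genuinely concurrent to the post-synchronization clock) produces a partner in $\allRaces$. If you want to avoid patching this, the clean route is the one the paper takes: appeal to Theorem~4.2 of \cite{Mathur:2018:HFR:3288538:3276515}.
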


Instead of building all pairs of combinations of events,
we can use the (read/write) events reported by the \SHB\ algorithm as a starting point.

\begin{definition}[\SHB\ Post-Processing]
\label{def:shbp-post}  
  Let $T$ be a well-formed trace.
  Let $\vcEvt$ be the set obtained 
  by processing $T$ via the \SHB\ algorithm.
  Let $R$ be the set of events that are in a race as reported by $\raceVC{}{}$.
  We define
  \bda{lcl}
  \accR{T} & = &
  \left \{
  \ba{lcl}
  (e,f) & \mid & f \in R \wedge e \in T \wedge
  \\ & & \mbox{one write among $e,f$} \wedge
  \\ & & \pos{e} < \pos{f}
  \wedge
  \\ & & (\pos{e}, V_1), (\pos{f}, V_2) \in \vcEvt \wedge
  \\ & & \neg \lteVC{V_1}{V_2}
  \ea
  \right \}   
  \eda
\end{definition}
For each event in a race as reported by \SHB,
we search for a potential race partner that appears before this event in the trace.
We collect all such pairs in the set $\accR{T}$.
Thus, we obtain all (concurrent) data race pairs.
This is guaranteed by Theorem 4.2. from~\cite{Mathur:2018:HFR:3288538:3276515}:
For each (concurrent) conflicting pair $(e,f) \in \allRaces$
where $\pos{e} < \pos{f}$ we have that the \SHB\ algorithm reports that $f$ is in a race.
Hence, we can derive the following result.

\begin{corollary}[\SHB\ Post-Processing Concurrent Races]
  \label{cor:shb-post-conc-races}
  Let $T$ be a well-formed trace.
  Let $\vcEvt$ be the set obtained 
  by processing $T$ via the \SHB\ algorithm.
  Let $R$ be the set of events that are in a race as reported by $\raceVC{}{}$.
  Let $(e,f) \in \allRaces$ such that $e, f$ are concurrent and $\pos{e} < \pos{f}$.
  Then, $f \in R$ and there exists $e \in T$ such that
  $(e,f) \in \accR{T}$.
\end{corollary}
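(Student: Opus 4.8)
The plan is to verify, clause by clause, that the given pair $(e,f)$ meets every condition in the definition of $\accR{T}$ (Definition~\ref{def:shbp-post}), relying on two ingredients: the completeness of \SHB's race reporting for the later event of a concurrent conflicting pair, and the fact that the vector clocks recorded in $\vcEvt$ faithfully encode the schedulable happens-before relation.

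First I would dispatch the membership $f \in R$. This is exactly the completeness property of \SHB\ invoked in the surrounding text: by Theorem~4.2 of~\cite{Mathur:2018:HFR:3288538:3276515}, for every concurrent conflicting pair $(e,f) \in \allRaces$ with $\pos{e} < \pos{f}$ the \SHB\ algorithm reports the later event $f$ as being in a race, so $f \in R$. The remaining ``easy'' clauses of $\accR{T}$ then follow immediately: $e \in T$ is a hypothesis; since $(e,f) \in \allRaces$ at least one of $e,f$ is a write; and $\pos{e} < \pos{f}$ is assumed. Because $e$ and $f$ are read/write events, Algorithm~\ref{alg:vc-race-checker-adapt} records their vector clocks, so there exist $V_1, V_2$ with $(\pos{e}, V_1), (\pos{f}, V_2) \in \vcEvt$.

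The only substantive clause is $\neg \lteVC{V_1}{V_2}$. Here I would observe that Algorithm~\ref{alg:vc-race-checker-adapt} computes the thread vector clocks $\threadVC{i}$ (together with $\lockVC{x}$ and $\lastWriteVC{x}$) in exactly the same way as Algorithm~\ref{alg:vc-edges-conc-read-write}, and records for $e, f$ precisely those clocks; hence the epoch/vector-clock characterisations of Lemmas~\ref{le:time-stamp-vc-concurrent} and~\ref{le:time-stamp-vc-hb} apply verbatim to $V_1$ and $V_2$. Writing $j = \compTID{e}$ and $k = \accVC{V_1}{j}$ for $e$'s own epoch time stamp, it suffices to produce a single coordinate at which $V_1$ exceeds $V_2$, and coordinate $j$ is the natural candidate: I would show $\accVC{V_2}{j} < k$. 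Since $e$ and $f$ are concurrent we have $\neg(\hb{e}{f})$, so the contrapositive of Lemma~\ref{le:time-stamp-vc-hb} rules out $\accVC{V_2}{j} > k$, giving $\accVC{V_2}{j} \leq k$. Establishing strictness then yields $\neg \lteVC{V_1}{V_2}$, and all clauses of $\accR{T}$ are met, so $(e,f) \in \accR{T}$.

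The main obstacle is precisely excluding the boundary case $\accVC{V_2}{j} = k$, which neither Lemma~\ref{le:time-stamp-vc-hb} (phrased with a strict inequality) nor Lemma~\ref{le:time-stamp-vc-concurrent} addresses directly. I would close this gap with the standard vector-clock invariant: the $j$-component of any recorded clock counts the thread-$j$ events lying in its causal past, and $e$ is the thread-$j$ event carrying time stamp $k$, so $\accVC{V_2}{j} = k$ would force $e$ into $f$'s causal past, i.e.\ $\hb{e}{f}$, contradicting the concurrency of $e$ and $f$. Hence $\accVC{V_2}{j} < k$ is strict, completing the argument. (Note that $\neg(\hb{f}{e})$ holds automatically from $\pos{e}<\pos{f}$ by Lemma~\ref{le:happens-before-vs-trace-position}, so the boundary argument is the sole delicate point.)
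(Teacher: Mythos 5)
Your proposal is correct and follows essentially the same route as the paper, which derives the corollary directly from Theorem~4.2 of Mathur et al.\ (giving $f \in R$) together with the definition of $\accR{T}$. You go further than the paper by explicitly verifying the clause $\neg \lteVC{V_1}{V_2}$ from concurrency of $e$ and $f$ --- including the equality boundary case $\accVC{V_2}{j} = k$ that Lemmas~\ref{le:time-stamp-vc-concurrent} and~\ref{le:time-stamp-vc-hb} leave uncovered --- a detail the paper leaves implicit, so this is a faithful but more complete rendering of the same argument.
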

By construction $\accR{T} \subseteq \allRaces$.

\begin{example}
Recall the trace from Example~\ref{ex:shb-incomplete}.
The set of events in a race reported by $\raceVC{}{}$
are $\{ \writeEE{x}{3}, \readEE{x}{4}, \readEE{x}{5} \}$.
Based on the above post-processing characterized by Corollary~\ref{cor:shb-post-conc-races},
we find the following race pairs
\bda{l}
\{ (\writeEE{x}{1}, \writeEE{x}{3}), (\readEE{x}{2}, \writeEE{x}{3}),
(\writeEE{x}{1}, \readEE{x}{4}),
\\
(\writeEE{x}{1}, \readEE{x}{5}), (\writeEE{x}{3}, \readEE{x}{5}) \}
\eda
\end{example}  

Time complexity of the \SHB\ post-processing phase to predict all (concurrent)
data races is $O(n*n*k)$.
The algorithm reports $O(n)$ conflicting events.
Based on Corollary~\ref{cor:shb-post-conc-races} we
can use these conflicting events as a starting
point and scan through through the trace for race partners.
This requires time $O(n*n)$.
For each pair the comparison among their vector clocks takes time $O(k)$.
We assume that lookup of the vector clock for each event in $\vcEvt$ takes constant time.
Hence, $O(n*n*k)$.


\end{document}